%% file: main.tex
\documentclass[sigconf, nonacm]{acmart}
\AtBeginDocument{%
	\providecommand\BibTeX{{%
			Bib\TeX}}}

\usepackage{wrapfig}
\usepackage{graphicx}
\usepackage{textcomp}
\usepackage{xcolor}
\usepackage[colorinlistoftodos,prependcaption,textsize=footnotesize]{todonotes}

\usepackage{enumitem}
\input{solidity-highlight.tex}
\input{circom-highlight.tex}
\input{json-highlight.tex}
\usepackage{graphicx}
\usepackage{multirow}
\usepackage{xurl}
\usepackage{hyperref}
\usepackage{arydshln}
\usepackage{pifont}
\usepackage{xcolor, colortbl}
\usepackage[linesnumbered,ruled,vlined]{algorithm2e}
\usepackage{algpseudocode}
\usepackage[flushleft, para]{threeparttable}
\usepackage{boldline}
\usepackage{fancyvrb}
\usepackage{soul}
\usepackage{adjustbox}
\usepackage[operators,sets,probability]{cryptocode}
\usepackage{makecell}
\usepackage[dvipsnames]{xcolor}

\hypersetup{
  colorlinks   = true, 
  urlcolor     = blue, 
  linkcolor    = blue, 
  citecolor   = red 
}

\usepackage{xspace}
\newcommand{\kibitz}[2]{\ifnum\Comments=1\textcolor{#1}{#2}\fi}

\newcount\Comments
\def\BibTeX{{\rm B\kern-.05em{\sc i\kern-.025em b}\kern-.08em
    T\kern-.1667em\lower.7ex\hbox{E}\kern-.125emX}}

\usepackage{listings}
\usepackage{xcolor}

\definecolor{codegray}{rgb}{0.95,0.95,0.95}
\definecolor{codeblue}{rgb}{0.0,0.0,0.6}
\definecolor{codegreen}{rgb}{0,0.5,0}
\definecolor{codered}{rgb}{0.6,0,0}

\usetikzlibrary{positioning, arrows.meta, shapes.geometric, calc, shadows}

\definecolor{compGreen}{RGB}{50, 160, 60} 
\definecolor{failRed}{RGB}{200, 50, 50}   

\usetikzlibrary{positioning, arrows.meta, shapes.geometric, shadows, calc, backgrounds}
\usetikzlibrary{positioning, arrows.meta, shapes.geometric, calc, fit, backgrounds}
\usetikzlibrary{shapes.geometric, arrows.meta, positioning, shadows, backgrounds, fit, calc}
\definecolor{physGray}{RGB}{150, 150, 160} 
\definecolor{foldGray}{RGB}{240, 240, 245} 
\definecolor{accColor}{RGB}{100, 100, 100} 
\definecolor{ptBlue}{RGB}{70, 130, 180}   
\definecolor{stGreen}{RGB}{50, 160, 60}   

\lstdefinestyle{terminalStyle}{
    backgroundcolor=\color{codegray},   
    commentstyle=\color{codegreen},
    keywordstyle=\color{codeblue}\bfseries,
    stringstyle=\color{codered},
    basicstyle=\ttfamily\footnotesize, 
    breakatwhitespace=false,         
    breaklines=true,                 
    captionpos=b,                    
    keepspaces=true,                 
    frame=single,                    
    framesep=3pt,
    rulecolor=\color{black!30},      
    showspaces=false,                
    showstringspaces=false,
    showtabs=false,                  
    tabsize=2,
    language=bash,
    morekeywords={python3, bomz, ./run_icse_experiments.sh} 
}

\input{macros}

\begin{document}


\title{%
\veribom: Secure and Verifiable SBOM Sharing Via Zero-Knowledge Proofs
}%
\author{%
Gianpietro Castiglione
}
\email{Gianpietro.Castiglione@newcastle.ac.uk}
\affiliation{%
  \institution{Newcastle University}
  \country{United Kingdom}
}
\author{%
Shahriar Ebrahimi
}
\email{Shahriar.Ebrahimi@newcastle.ac.uk}
\affiliation{%
  \institution{Newcastle University}
  \country{United Kingdom}
}
\author{%
Narges Khakpour
}
\email{Narges.Khakpour@newcastle.ac.uk}
\affiliation{%
  \institution{Newcastle University}
  \country{United Kingdom}
}
\begin{abstract}
A \textit{Software Bill of Materials} (SBOM) is a key component for the transparency of software supply chain; it is a structured inventory of the components, dependencies, and associated metadata of a software artifact. 
However, an SBOM often contain sensitive information that organizations are unwilling to disclose in full to anyone, for two main concerns: technological risks deriving from exposing proprietary dependencies or unpatched vulnerabilities, and business risks, deriving from exposing architectural strategies. 
Therefore, delivering a plaintext SBOM may result in the disruption of the intellectual property of a company.

To address this challenge, we present \veribom, a trustless, selectively disclosed SBOM framework that provides cryptographic verifiability of SBOMs using zero-knowledge proofs.
Within \veribom, third parties can validate specific statements about a delivered software, mainly regarding the authenticity of the dependencies and policy compliance, without inspecting the content of an SBOM. 
Respectively, \veribom\ allows independent third parties to verify if a software contains authentic dependencies distributed by official package managers and that the same dependencies satisfy rigorous policy constraints such as the absence of vulnerable dependencies or the adherence with specific licenses models.

\veribom\ leverages a scalable vector commitment scheme together with folding-based proof aggregation to produce succinct zero-knowledge proofs that attest to security and compliance properties while preserving confidentiality. 
Crucially, the verification process requires no trust in the SBOM publisher beyond the soundness of the underlying primitives, and third parties can independently check proofs against the public cryptographic commitments.

We implement \veribom, analyze its security in the probabilistic polynomial-time (PPT) adversary model, and evaluate its performance on real-world package registries. The results show that our method enables scalable, privacy-preserving, and cryptographically verifiable SBOM sharing and validation.
\end{abstract}

\maketitle


\input{Sec1-Introduction}
\input{Sec6-RelatedWork}
\input{Sec2-Introduction-ZKP}
\input{Sec3-Methodv2}

\input{Sec4-Soundness}

\input{Sec5-Evaluation}

\section{Conclusions}\label{sec:discussion}
We proposed \veribom, a framework for selective disclosure of SBOMs using vector commitments and zero-knowledge proofs.
\veribom\ enables independent third parties to verify the authenticity of artifact dependencies and enforce policy constraints against SBOM metadata in a privacy-preserving manner without revealing the complete SBOM. 
\veribom\ has been implemented as a prototype of the described approach and evaluated its scalability. To address scalability issues, we employed a folding technique for proof generation integrated with a semantic engine for efficient propagation of policy constraints among dependencies.
Our experiments show its effectiveness in reducing overhead.
As future work, we plan to deploy our system in real-world package distribution settings to evaluate its effectiveness under operational conditions. We also aim to evaluate it with other package repositories and improve our techniques for efficient incremental SBOM updates, as well as exploring enforcing other policies.


\section*{Acknowledgment}
This work is supported by the UK Engineering and Physical Sciences Research Council (EPSRC) through grants EP/X037274/1.

\newpage

\bibliographystyle{ieeetr}
\bibliography{ref}

%

\end{document}

%% file: solidity-highlight.tex
\usepackage{listings, xcolor}

\definecolor{verylightgray}{rgb}{.97,.97,.97}

\lstdefinelanguage{Solidity}{
	keywords=[1]{anonymous, assembly, assert, balance, break, call, callcode, case, catch, class, constant, continue, constructor, contract, debugger, default, delegatecall, delete, do, else, emit, event, experimental, export, external, false, finally, for, function, gas, if, implements, import, in, indexed, instanceof, interface, internal, is, length, library, log0, log1, log2, log3, log4, memory, modifier, new, payable, pragma, private, protected, public, pure, push, require, return, returns, revert, selfdestruct, send, solidity, storage, struct, suicide, super, switch, then, this, throw, transfer, true, try, typeof, using, value, view, while, with, addmod, ecrecover, keccak256, mulmod, ripemd160, sha256, sha3}, 
	keywordstyle=[1]\color{blue}\bfseries,
	keywords=[2]{address, bool, byte, bytes, bytes1, bytes2, bytes3, bytes4, bytes5, bytes6, bytes7, bytes8, bytes9, bytes10, bytes11, bytes12, bytes13, bytes14, bytes15, bytes16, bytes17, bytes18, bytes19, bytes20, bytes21, bytes22, bytes23, bytes24, bytes25, bytes26, bytes27, bytes28, bytes29, bytes30, bytes31, bytes32, enum, int, int8, int16, int24, int32, int40, int48, int56, int64, int72, int80, int88, int96, int104, int112, int120, int128, int136, int144, int152, int160, int168, int176, int184, int192, int200, int208, int216, int224, int232, int240, int248, int256, mapping, string, uint, uint8, uint16, uint24, uint32, uint40, uint48, uint56, uint64, uint72, uint80, uint88, uint96, uint104, uint112, uint120, uint128, uint136, uint144, uint152, uint160, uint168, uint176, uint184, uint192, uint200, uint208, uint216, uint224, uint232, uint240, uint248, uint256, var, void, ether, finney, szabo, wei, days, hours, minutes, seconds, weeks, years},	
	keywordstyle=[2]\color{teal}\bfseries,
	keywords=[3]{block, blockhash, coinbase, difficulty, gaslimit, number, timestamp, msg, data, gas, sender, sig, value, now, tx, gasprice, origin},	
	keywordstyle=[3]\color{violet}\bfseries,
	identifierstyle=\color{black},
	sensitive=true,
	comment=[l]{//},
	morecomment=[s]{/*}{*/},
	commentstyle=\color{gray}\ttfamily,
	stringstyle=\color{red}\ttfamily,
	morestring=[b]',
	morestring=[b]"
}

%% file: circom-highlight.tex
\usepackage{listings, xcolor}

\definecolor{verylightgray}{rgb}{.97,.97,.97}

\lstdefinelanguage{Circom}{
	keywords=[1]{public, output, for, input, template, component}, 
	keywordstyle=[1]\color{blue},
	keywords=[2]{signal,address, bool, byte, bytes, bytes1, bytes2, bytes3, bytes4, bytes5, bytes6, bytes7, bytes8, bytes9, bytes10, bytes11, bytes12, bytes13, bytes14, bytes15, bytes16, bytes17, bytes18, bytes19, bytes20, bytes21, bytes22, bytes23, bytes24, bytes25, bytes26, bytes27, bytes28, bytes29, bytes30, bytes31, bytes32, enum, int, int8, int16, int24, int32, int40, int48, int56, int64, int72, int80, int88, int96, int104, int112, int120, int128, int136, int144, int152, int160, int168, int176, int184, int192, int200, int208, int216, int224, int232, int240, int248, int256, mapping, string, uint, uint8, uint16, uint24, uint32, uint40, uint48, uint56, uint64, uint72, uint80, uint88, uint96, uint104, uint112, uint120, uint128, uint136, uint144, uint152, uint160, uint168, uint176, uint184, uint192, uint200, uint208, uint216, uint224, uint232, uint240, uint248, uint256, var, void, ether, finney, szabo, wei, days, hours, minutes, seconds, weeks, years},	
	keywordstyle=[2]\color{teal},
	keywords=[3]{block, blockhash, coinbase, difficulty, gaslimit, number, timestamp, msg, data, gas, sender, sig, value, now, tx, gasprice, origin},	
	keywordstyle=[3]\color{violet},
        otherkeywords = {;,\[,\],*,+,-,--,<<,>>,.,<==,===,?,:,=,<,++},
        keywordstyle=[4]\color{red},
        keywords=[4]{;,<<,>>,?,:,<==,.,===,=,<,-,--,*,+,++},
	keywordstyle=[5]\color{brown},
        keywords=[5]{\[,\]},
        identifierstyle=\color{black},
	sensitive=true,
	comment=[l]{//},
	morecomment=[s]{/*}{*/},
	commentstyle=\color{gray},
	stringstyle=\color{red},
	morestring=[b]',
	morestring=[b]"
}

%% file: json-highlight.tex
\usepackage{listings, xcolor}

\definecolor{delim}{RGB}{20,105,176}
\definecolor{numb}{RGB}{106, 109, 32}
\definecolor{string}{rgb}{0.64,0.08,0.08}

\lstdefinelanguage{json}{
    showspaces=false,
    showtabs=false,
    breaklines=true,
    postbreak=\raisebox{0ex}[0ex][0ex]{\ensuremath{\color{gray}\hookrightarrow\space}},
    breakatwhitespace=true,
    basicstyle=\ttfamily\small,
    upquote=true,
    morestring=[b]",
    stringstyle=\color{string},
    literate=
     *{0}{{{\color{numb}0}}}{1}
      {1}{{{\color{numb}1}}}{1}
      {2}{{{\color{numb}2}}}{1}
      {3}{{{\color{numb}3}}}{1}
      {4}{{{\color{numb}4}}}{1}
      {5}{{{\color{numb}5}}}{1}
      {6}{{{\color{numb}6}}}{1}
      {7}{{{\color{numb}7}}}{1}
      {8}{{{\color{numb}8}}}{1}
      {9}{{{\color{numb}9}}}{1}
      {\{}{{{\color{delim}{\{}}}}{1}
      {\}}{{{\color{delim}{\}}}}}{1}
      {[}{{{\color{delim}{[}}}}{1}
      {]}{{{\color{delim}{]}}}}{1},
}

%% file: macros.tex
\newcommand{\auditor}{auditor\xspace}
\newcommand{\prover}{prover\xspace}

\newcommand{\commitment}{\ensuremath{\mathcal{C}}\xspace}
\newcommand{\commit}{ \ensuremath{\mathsf{commit}}\xspace}

\newcommand{\digest}{\ensuremath{g}\xspace}
\newcommand{\False}{ \ensuremath{\bot}\xspace}
\newcommand{\hashFun}{\ensuremath{H}\xspace}

\newcommand{\keygen}{ \ensuremath{\mathsf{key\_gen}}}
\newcommand{\leaf}{ \ensuremath{\ell}\xspace}

\newcommand{\MerklePath}[1]{\ensuremath{\pi_{#1}}\xspace}
\newcommand{\mlanguage}{\ensuremath{\mathcal{L}}\xspace}
\newcommand{\mmessage}{\ensuremath{m}\xspace}
\newcommand{\mproof}[1]{\ensuremath{\MerklePath {#1}}\xspace}

\newcommand{\mvector}{\ensuremath{v}\xspace}
\newcommand{\open}{ \ensuremath{\mathsf{open}}}
\newcommand{\opening}{ \ensuremath{\mathcal{O}}}
\newcommand{\packagesDB}{ \ensuremath{\mathcal{P}}}
\newcommand{\package}{ \ensuremath{p}\xspace}

\newcommand{\prove}{ \ensuremath{\mathsf{prove}}}
\newcommand{\publicParams}{\ensuremath{{\lambda_p}}\xspace}

\newcommand{\relation}{\ensuremath{\mathcal{R}}\xspace}

\newcommand{\securityParam}{\ensuremath{{{\lambda_s}}}\xspace}
\newcommand{\setup}{ \ensuremath{\mathsf{setup}}\xspace}

\newcommand{\structuredHashFunc}{ \ensuremath{{\mathcal{H}}}\xspace}
\newcommand{\PT}{\ensuremath{PT}\xspace}
\newcommand{\ST}{\ensuremath{ST}\xspace}
\newcommand{\VC}{\ensuremath{VC}\xspace}
\newcommand{\PPE}{\ensuremath{PPE}\xspace}
\newcommand{\True}{ \ensuremath{\top}\xspace}

\newcommand{\verify}{ \ensuremath{\mathsf{verify\_commit}}\xspace}
\newcommand{\verifyZKP}{ \ensuremath{\mathsf{verify\_zkp\_proof}}\xspace}

\newcommand{\zkpproof}[2]{ \ensuremath{{\eta_{#2}^{#1}}}\xspace}
\newcommand{\attacker}{\ensuremath{\mathcal{A}}\xspace}
\newcommand{\tattacker}{\ensuremath{\mathcal{A}_{T}}\xspace}
\newcommand{\cattacker}{\ensuremath{\mathcal{A}_{C}}\xspace}
\newcommand{\mattacker}{\ensuremath{\mathcal{A}_{V}}\xspace}

\newcommand{\veribom}{\textsc{VeriSBOM}\xspace}

\newcommand{\STAB}[2]{\begin{tabular}{@{}#1@{}}#2\end{tabular}}
\RestyleAlgo{ruled}

\renewcommand{\arraystretch}{1.4}

\newcommand\lstresetnumber{\global\let\thelstnumber=\orig@lstnumber}

\theoremstyle{definition}
\newtheorem{definition}{Definition}[section]
\newtheorem{theorem}{Theorem}

\newcommand{\dontknow}[1]{\textcolor{blue!70!green}{} }

\newcommand{\leaveout}[1]{\textcolor{blue!70!green}{} }

%% file: Sec1-Introduction.tex
\section{Introduction}


Software supply chain attacks target the trust relationships and processes involved in software development and distribution. These attacks exploit vulnerabilities in third-party dependencies, weaknesses in the build and deployment infrastructure, or human factors such as credential theft and social engineering~\cite{williams2025research}. Attackers may inject malicious code into source repositories, compromise widely used packages through project hijacking, or tamper with the build process to insert backdoors~\cite{h2018eventstream,gruhn2013security,wu2021feasibility}.
To mitigate such threats, several techniques have been proposed, including detecting vulnerabilities in code dependencies~\cite{ccs/DuanBXKL17,icse/WooPKLO21,uss/Xiao0YXYLL0HZS20,sp/KimWLO17,vu2021lastpymile}, securing package repositories~\cite{KuppusamyDC17,kuppusamy2016diplomat}, and protecting the build process through verifiable builds, cryptographic signing, and strict access controls~\cite{TorresAriasAKC19,ferraiuolo2022policy} (Ladisa et al. published a survey on the topic \cite{ladisa2023sok}). 
Beyond these technical measures, \emph{transparency} has emerged as a critical principle for reinforcing trust in the software supply chain. By improving visibility and accountability, transparency mechanisms help stakeholders detect and mitigate hidden risks. One widely adopted approach to achieve this is the Software Bill of Materials (SBOM)~\cite{ladisa2023sok,xia2023empirical}, which provides an inventory of all software components, including third-party libraries and their versions. SBOMs enable security risk assessments, vulnerability identification, and regulatory compliance~\cite{xia2023empirical}. For instance, the U.S. Office of Management and Budget now mandates SBOMs for federal software procurement, following NIST SSDF guidelines.  

\paragraph{Challenges.}
Producing an SBOM is only the first step: it must be securely shared, verified, and integrated into downstream workflows to ensure compliance with security, quality, and licensing requirements. Achieving this raises significant challenges in the secure sharing and validation of SBOMs that must be addressed for widespread adoption. SBOMs may expose confidential or proprietary information that adversaries could exploit to locate and target vulnerable components, or that competitors could infer for architectural or strategic insight~\cite{zahan2023software}. Hence, concerns about sensitive information disclosure remain a major barrier to broader SBOM adoption~\cite{hendrick2022software,xia2023empirical}. To mitigate these risks, SBOM sharing practices must support selective information sharing and access control to balance transparency with protection of sensitive data~\cite{xia2023empirical}. Among the proposed conceptual frameworks for selective SBOM sharing ~\cite{song2025bc,xia2024trust,ishgair2026trustworthyconfidentialsbomexchange}, no practical solution yet enables privacy-preserving compliance verification in real-world deployments (\textbf{C1}).

Existing SBOM validation often assumes access to source code or build environments to verify SBOM data (e.g., dependencies), which is impractical in closed-source or proprietary settings. Even when SBOMs are available, users may lack the expertise or tooling to interpret them and enforce security or policy requirements, particularly in rapidly evolving software systems where dependencies change frequently. At the same time, selective (privacy-preserving) SBOM disclosure creates a verification gap: without full visibility, a publisher could omit a vulnerable or disallowed dependency, alter version metadata, or make unverifiable policy claims. Thus, consumers need guarantees that disclosed dependencies are authentic, that hidden entries do not violate stated security or policy conditions, and that verification remains efficient as dependencies evolve. These risks make \emph{verifiability} crucial: we need trustless, proof-based validation to ensure that disclosure does not become a vector for concealment or manipulation (\textbf{C2}).

\paragraph{Solution.}
To address challenges C1 and C2, we propose \veribom, a novel approach for secure and verifiable SBOM sharing that leverages two cryptographic primitives: vector commitment (VC) schemes~\cite{vc2,vc3} and zero-knowledge proof systems (ZKPs)~\cite{goldwasser1985knowledge}. 
We use VC schemes to build a \textit{Dual-Tree Architecture}: a \textit{Package Tree} obtained from a \textit{package manager} to commit the inclusion and authenticity of dependencies within a package repository and a parallel, isomorphic \textit{Shadow Tree} obtained from an independent \textit{auditor} to commit the compliance of the packages to specific policy constrains.
These commitments are made publicly available as authoritative \textit{roots}, and on top of them, \veribom~ employs ZKPs, a cryptographic primitive where a \emph{prover} generates a proof about an SBOM, which a \textit{verifier} can use to validate the claim against the public roots, without revealing any additional information about the SBOM itself (addressing \textbf{C1}).
The proof shows that the dependencies in an artifact's SBOM are authentic and satisfy policy constraints (addressing \textbf{C2}).

A key challenge in designing ZKP proofs is their potentially high computational cost, including memory consumption and the time required for proof generation and verification. This challenge is particularly significant when validating SBOMs of large software systems with many dependencies, as each dependency must be individually checked against the same set of constraints in our approach. 
To address this, we leverage folding schemes~\cite{nova-paper,nova}, a special class of ZKPs that aggregate multiple executions of a \textit{verification circuit} into a constant-sized proof. 

\paragraph{Contributions.} Within this work, the key contributions are: 

\begin{itemize}
	\item \emph{Privacy-preserving properties verification.} 
    We propose an approach for sharing SBOMs using ZKPs, which enables the clients to verify  policies about the dependencies without exposing sensitive information.

    \item \emph{Security analysis.}
    We analyze the security of our approach in different scenarios, against a {probabilistic polynomial-time} adversary with varying objectives.
    
    \item \textit{Scalability.} 
    To address scalability issues, we leveraging \textit{recursive folding-based schemes} for ZKP-proof generation as well as a \textit{policy engine}, to reduce the verification complexity.
    	
	\item \emph{Implementation and evaluation.} 
    We implement \veribom\ as a prototype integrating the designed architecture and conduct experiments demonstrating it effectiveness and scalability to real-world package repositories.
\end{itemize}

The remainder of this paper is organized as follows. Section~\ref{sec:related} compares our approach with related work. Section~\ref{sec:background} provides the necessary background to our work. Sections \ref{sec:overview} and \ref{sec:protocol-overview} introduce and describe our approach, respectively. 
In Section~\ref{sec:security-analysis}, we describe some properties of our approach.
In Section~\ref{sec:evaluation} we discuss the implementation and some experiments. Section~\ref{sec:discussion} concludes the paper and outlines directions for future research.

%% file: Sec6-RelatedWork.tex
\section{Related Work}\label{sec:related}
\paragraph{SBOM Security}
In a recent work, Ishgair et al. have proposed Petra \cite{ishgair2026trustworthyconfidentialsbomexchange} for addressing confidentiality and selective disclosure of SBOMs by using Ciphertext-Policy Attribute-Based Encryption (CP-ABE).
Although Petra seems to answer the challenges we highlight, our solution differs significantly from Petra, both in its objectives and in the tools used. Such fundamental differences are illustrated in Table \ref{tab:comparison} and generally explained in the following.

\begin{table}[h]
    \scriptsize
    \centering
    \caption{\textbf{Comparison between Petra and \veribom.} While Petra relies on decryption for verification (exposing data), \veribom\ enables \textit{blind verification} of properties without key distribution.}
    \label{tab:comparison}
    \renewcommand{\arraystretch}{1.2} 
    \begin{tabular}{@{} l p{3.5cm} p{2.5cm} @{}}
        \toprule
        \textbf{Feature} & \textbf{Petra} & \textbf{\veribom} \\
        \midrule
        \textbf{Privacy Mechanism} & Attribute-Based Encryption (CP-ABE) & Zero-Knowledge Proofs (ZK-SNARKs/Folding) \\
        
        \textbf{Verification Scope} & \textit{Structural Integrity} \newline (Is the redacted file authentic?) & \textit{Semantic Compliance} \newline (Does the hidden file satisfy security policies?) \\
        
        \textbf{Verifier Visibility} & \textbf{Partial Disclosure} \newline (Must decrypt data to verify it) & \textbf{Zero Disclosure} \newline (Blind verification on hidden data) \\
               
        \textbf{Key Management} & Requires per-user secret keys & Keyless Public Proofs \\
        
        \textbf{Interaction} & Interactive \newline (Verifier requests keys from KMS) & Non-Interactive \newline (Verifier checks public knowledge) \\
        \bottomrule
    \end{tabular}
\end{table}

Petra effectively enforces granular access control to specific parts of an SBOM that a software producer has redacted. To access the confidential parts of the SBOM, the verifier must possess the necessary key and attributes to decrypt and view the underlying metadata in plaintext. 
The distribution of keys via the Key Management System (KMS) may create a 'key management bottleneck': KMS require to be continuously online to issue credentials for every potential verifier in the global supply chain.
Conversely, our framework introduces \textit{blind verifiability}. 
By leveraging zero-knowledge proofs, we allow any verifier to cryptographically confirm that obfuscated SBOM parts satisfy specific policies without ever revealing the component details. 
Furthermore, while Petra relies on an interactive key management service to distribute credentials, our solution is non-interactive and keyless, relying solely on public knowledge necessary for the verification process.

Song et al.~\cite{song2025bc} and Xia et al.~\cite{xia2024trust} propose frameworks based on blockchain for managing SBOMs, but they both remain as conceptual frameworks, without providing any concrete solutions. Song et al. provide a decentralized and transparent approach supporting dependency queries and vulnerability alert propagation but does not concern with secure sharing of SBOM. 
Xia et al. ensure integrity and selective disclosure via verifiable credentials but incur the cost and complexity of blockchain consensus. 
Other approaches designate different solutions for SBOMs, not scalable at all \cite{10.1145/3664476.3669975}, \cite{10.1145/3654442}, \cite{linuxfoundation2022sbom}.
In contrast, we  implemented our scheme, using vector commitments and zero-knowledge proofs to enable selective disclosure and authenticity verification of dependencies in a confidentiality-preserving manner, without expensive consensus, and we prove its soundness.


\paragraph{Detecting Malicious Packages}
Vulnerability detection in software packages typically follows two approaches: version-based methods that check if a code version has known vulnerabilities~\cite{ccs/DuanBXKL17,icse/WooPKLO21}, and code-based methods that match code fragments to vulnerable patterns~\cite{uss/Xiao0YXYLL0HZS20,sp/KimWLO17}. V1SCAN~\cite{uss/WooCLO23} combines both using classification for C/C++ programs. Beyond detection, works like \textsc{Centris}~\cite{icse/WooPKLO21} and VDiOS~\cite{icse/ReidJM22} analyze modified or reused open source code to trace vulnerabilities, while Zahan et al.~\cite{zahan2022weak} detect npm supply chain weaknesses via metadata analysis. Software Composition Analysis (SCA) also identifies third-party components and risks but lacks end-to-end supply chain integrity.
Furthermore, several works address package manager security; for instance, \cite{KuppusamyDC17} proposes a method to prevent rollback attacks, which revert projects to vulnerable versions.
Our approach complements these efforts. While prior work focuses on vulnerability detection and propagation analysis via code analysis, we aim to ensure software supply chain integrity by enabling secure SBOM sharing with verifiable claims, allowing the authenticity and policy compliance of dependencies to be validated.

\paragraph{Security of Software Production Process}
Several approaches focus on improving software supply chain security by securing the software production process. Among these, Torres-Arias et al.~\cite{TorresAriasAKC19} propose \textit{in-toto}, a framework that enforces supply chain integrity through cryptographically signed metadata generated at each production step. 
To further ensure accountability, \textit{PolyLog}~\cite{ferraiuolo2022policy} introduces a framework to improve auditability of the software supply chain through \textit{Policy Transparency}, where claims about the artifact are expressed using an authorization logic and published in transparency logs, facilitating software auditing and helping to identify potential abuses. While these approaches focus on securing the software production phase, which may include generating an SBOM, our work emphasizes the secure sharing and validation of SBOMs.

\paragraph{Cryptography for Software Supply Chain}
Cryptographic techniques have long been applied to secure software artifacts and mitigate supply chain attacks. Package and repository signing frameworks such as Sigstore~\cite{newman2022sigstore} lower adoption barriers through developer-friendly signing infrastructure, while Speranza~\cite{merrill2023speranza} leverages zero-knowledge co-commitments to ensure authenticity without exposing signer identities. Diplomat~\cite{kuppusamy2016diplomat} improves repository resilience to key compromise via flexible trust delegations, LastPyMile~\cite{vu2021lastpymile} detects inconsistencies between source and distributed packages through artifact signing and scraping, and ShIFt~\cite{singi2019shift} builds composite software identities from code, dependencies, and configurations to detect tampering in versions. While these approaches provide authenticity and integrity, or signer anonymity, they  assume full visibility into artifacts which limits their applicability in privacy-sensitive settings. In contrast, our work addresses the challenge of {selective SBOM disclosure} to make hidden components remain verifiable, a privacy-preserving verification problem not covered by prior solutions.

%% file: Sec2-Introduction-ZKP.tex
\section{Preliminaries}\label{sec:background}


\subsection{Vector Commitment Schemes}
\emph{Commitment schemes} are fundamental building blocks in cryptographic protocols that enable a party (called the \emph{committer} or \emph{prover}) to generate and publish a value $\digest = \commitment{}(\mmessage, \publicParams)$, called a \emph{digest}, that \emph{binds} them to an input \mmessage that should be hidden from others. Binding means that once the digest is created, the prover cannot change the input without being detected,
and \commitment is a function to build the digest for \mmessage and some specific {parameters \publicParams.}
At a later stage, the prover may \emph{open} the commitment, i.e., it reveals some (possibly partial) information that is claimed to be associated with the original hidden input~\mmessage corresponding to the published digest \digest. A  \emph{verifier} can then independently check whether the revealed information is consistent with the digest, without having any prior knowledge of the hidden input \mmessage.
This input can take different forms. A \emph{Vector Commitment} (\VC) is a commitment in which the prover commits to a vector $\mvector = [\mvector_1,\ldots,\mvector_n]$ of elements as the input, and the verifier can later verify the value at a specific index of the vector without revealing the entire vector. 


A widely used data structure for constructing and implementing vector commitments is the \emph{Merkle tree}, which is a binary tree built by recursively hashing node values in a bottom-up fashion.
Figure~\ref{fig:commitment} shows an example of a Merkle tree. In this tree, (i) the leaf $\leaf_i$ is the hash of $\mvector_i$ in the vector, i.e., $\ell_i=\structuredHashFunc(\mvector_i)$ where \structuredHashFunc is a hash function; (ii) each internal node's value is the hash of its two child nodes (i.e., {$H(h_{l} \parallel h_{r})$}, where $\parallel$ denotes concatenation, and $h_{l}$ and $h_{r}$ are the hashes of the left and right child nodes, respectively) ; (iii) the digest $\digest$ is the root of the Merkle tree; and (iv) a proof for the statement ``$v_k$ is the $k$-th element of the vector'' consists of the sibling nodes along the path from the $k$-th leaf to the root. 
This path, known as a \emph{Merkle path} (See Figure~\ref{fig:commitment} for an example), has a logarithmic size of $O(\log n)$.

\begin{figure}[t]
	\hspace{-0.18in}
	\includegraphics[scale=0.65]{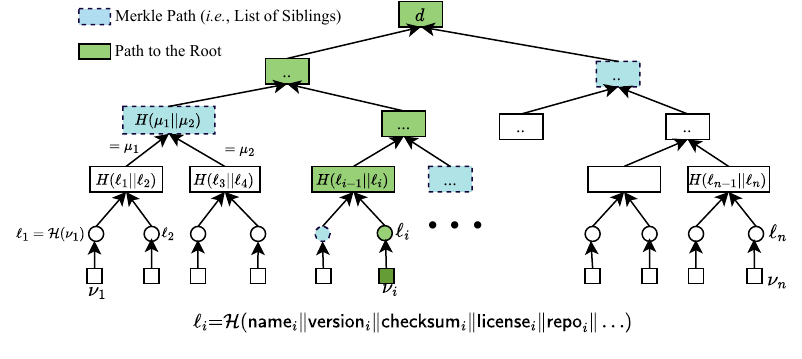}
	\caption{Proposed commitment method for package manager}
	\label{fig:commitment}
\end{figure}
To verify that a leaf $\ell_k$ belongs to a (balanced) Merkle tree with root $\digest$, the verifier is provided with its hash (i.e., $\ell_k$) and a Merkle path $\pi = [h_1, h_2, \dots, h_{\log n}]$, where each $h_i$ is the hash of a sibling node at level $i$ along the path from the leaf to the root, for $1\leq i \leq \log n$.
The verifier performs the following recursive computation to check whether this proof is valid:

\begin{enumerate}
    \item Initialize $x \gets \ell_k$;
    \item For each level $i = 1$ to $\log n$, compute:
    \[
    x :=
    \begin{cases}
        \hashFun(h_{i} \parallel x) & \text{if } h_{i} \text{ is a left child} \\
        \hashFun(x \parallel h_{i}) & \text{if } h_{i} \text{ is a right child}
    \end{cases}
    \]
\end{enumerate}

If the final value of $x$ is \digest, the verifier accepts the proof, confirming that $\ell_k$ is correctly included at position $k$ in the committed vector. Otherwise, the proof is rejected.
This recursive hash computation ensures that no adversary can forge membership for a value not originally committed, assuming that the hash functions \hashFun and \structuredHashFunc are collision-resistant.


%
%
%
%

A \VC scheme is called \emph{{position-binding}} if it guarantees that, after a commitment to a vector has been generated, it is computationally infeasible to produce two distinct valid openings for the same index. Formally, given a commitment $\commitment{}$ to a vector $\mvector$, no adversary can provide two different values $v_i$ and $v'_i$ along with their corresponding valid proofs $\mproof i$ and $\mproof j$
 unless their values are the same, i.e., $v_i = v'_i$.
This property ensures the integrity of each element in the committed vector, making it impossible to alter individual elements of the vector, after the commitment is published.

\subsection{Zero-Knowledge Proof Systems}
Zero-knowledge proof (ZKP) systems are cryptographic protocols that enable one party (the prover) to demonstrate the truthfulness of a statement to another party (the verifier) without revealing any additional data beyond the statement itself in the interaction between the prover and verifier~\cite{goldwasser1985knowledge}. A ZKP must satisfy three fundamental properties. (i) \emph{Completeness}, i.e., if the statement is true, an honest prover will  convince an honest verifier.
(ii) \emph{Soundness} stating that a dishonest prover cannot convince an honest verifier of a false statement.
(iii) \emph{Zero-Knowledge} to ensure that no additional information beyond the validity of the statement is revealed.

\begin{definition}\label{def:snarks}
Let \mlanguage be an NP language (i.e., it can be decided whether a string belongs to this language in non-deterministic polynomial time), and let \relation be a relation consisting of pairs $(u, w)$, where $u \in \mlanguage$ is a statement and $w$ is a witness provided by the prover  to show the truthfulness of the statement $u$.
An \emph{argument system} for $\mathcal{R}$ is a quadruple  $(\setup, \keygen, \prove, \verifyZKP)$ of
 algorithms where:

\begin{description}
    \item $\publicParams \gets {\setup(\lambda)}$ is an algorithm that returns the \emph{public parameters} $\publicParams$ based on the \emph{security parameter} $\lambda$ shared between the prover and verifier. 
    A security parameter shows the security level of a commitment, i.e., parameters chosen to make attacks computationally difficult, and a public parameter includes the setup parameters to guarantee the security level. 

    \item $(pk,vk)\gets\keygen(\publicParams, \relation)$ is an algorithm that returns a proving and verifying key pair $ (pk,vk) $ for a given relation \relation with respect to $ \publicParams $.

    \item $\zkpproof{}{} \gets \prove(pk, u, w)$ to generate a succinct proof $\zkpproof{}{}$ attesting that $(u, w) \in \relation$ for a proving key $pk$, a statement $u$, and a witness $w$.

    \item $b \gets \verifyZKP(vk, u, \zkpproof{}{})$ Given a verification key $vk$, a statement $u$, and a proof $\zkpproof{}{}$, the verifier outputs $b \in \{\True,\False\}$, accepting ($b = \True$) if the proof is valid or rejecting ($b = \False$) otherwise.
\end{description}

An argument system is non-interactive, if prover and verifier do not interact with each other.
A non-interactive argument system is SNARK~\cite{bitansky2012extractable,gennaro2013quadratic,parno2016pinocchio,groth16,plonk}  if it satisfies additional properties: \emph{succinctness} (the proof $\zkpproof{}{}$ is short and can be verified quickly), \emph{soundness}, and optionally \emph{zero-knowledge} (the proof reveals nothing about the witness $w$). 
\emph{SNARKs} focus on efficiency by reducing interaction between the prover and the verifier. Zero-knowledge SNARKs~(zkSNARKs) (e.g., ~\cite{groth16}) are a specific type of SNARKs that have the zero-knowledge property. 
A formal zero-knowledge definition also requires a simulator algorithm, which we omit here for brevity.
\end{definition}

\subsection{Folding Schemes}
Folding-based zkSNARKs are a set of specific cryptographic primitives that achieve incrementally verifiable computation~(IVC) by allowing verification of computations done by repeated application of the same function. More precisely, for a given function $f$, with initial input $z_0$, a folding scheme generates a proof $\Pi_i$ for stating that $z_i=f(z_{i-1})=f^i(z_0)=f(f^{i-1}(z_0))$, given a proof $\Pi_{i-1}$ for stating $z_{i-1}=f^{i-1}(z_0)$. In such a proving system, successful verification asserts with overwhelming probability~($1{-}\mbox{negl}(\lambda)$) that the following condition holds:  $\forall i \in [n], \; z_i = f^i(z_0)$. Nova~\cite{nova-paper} is the first folding scheme that is able to transform the verification of two NP statements into checking a single NP statement. This allows the prover to incrementally prove the correct execution of sequential computations represented by the form $y = F^l(x)$, where $F$ is the computation, $x$ is the initial input, and $l > 0$ represents the number of steps, i.e., application of $F$.

%% file: Sec3-Methodv2.tex
\section{An Overview of \veribom }
\label{sec:overview}
\input{Sec-ThreatModel}
\subsection{Architecture and Context}
\paragraph{Architecture}

A developer releases packages to a \emph{package manager} (e.g., Crates~\cite{cratesio} for Rust libraries), which maintain them in a centralized Package Database (Package DB). 
We construct a {publicly maintained vector commitment, which contains individual commitments for all packages and their respective releases. }
Fig.~\ref{fig:commitment} illustrates the structure of a \VC for a package repository implemented as a Merkle tree in our approach, where each leaf node represents a commitment to a specific version of a package. This means that we associate a leaf to each version of a package, e.g., a package with 100 unique release versions would have 100 unique leaves in the tree.
This process is publicly verifiable and can be preformed independently by any stakeholder in the software supply chain, such as developers, auditors, or third-party parties, without relying on centralized trust. 
Therefore, the final commitment can be maintained in a scalable and decentralized manner,
{where multiple parties can generate the VC in a collaborative way.
}

\begin{lstlisting}[language=json, basicstyle=\ttfamily\small, frame=single, numbers=none, caption={Example SBOM Entry}, label={listing.sbom.data}]
	{
		"name": "serde",
		"version": "1.0.136",
		"dependencies": ["serde_derive"],
		"hash": "abc123...",
		"license": "MIT"
	}
\end{lstlisting}
An SBOM contains various elements depending on the standard, such as component name, publisher, version, dependency list, SBOM author, file name, license information, and timestamp (see Listing~\ref{listing.sbom.data} for an example). Since some of these elements may be sensitive, our approach avoids sharing the full SBOM publicly. Instead, we construct a representation consisting of (i) public metadata with sensitive information concealed (e.g., revealing only a selected subset of dependencies or author details), and (ii) a set of zero-knowledge proofs generated by the \prover. These proofs attest to statements defined by the  consumer, such as the exclusion of known vulnerable dependencies, the inclusion of only approved packages, or compliance with licensing policies. For instance, a consumer can verify that a dependency belongs to an approved list without learning which specific dependency is used. To support complex policy checks that cannot be derived solely from dependency and SBOM metadata, the \prover\ may also rely on an \auditor, a trusted entity with access to the source code to analyze additional constraints and provide the analysis results as input for proof generation. The proofs, are generated using zkSNARK systems such as Spartan~\cite{spartan} or Groth16~\cite{groth16}, and encoded within folding schemes, ensuring validity checking without disclosing private information. 
The code consumer then verifies software dependencies using the public SBOM, the vector commitment of the package repository, and the zero-knowledge proofs provided by the \prover. Leveraging the zero-knowledge property of zkSNARKs, our method guarantees selective disclosure: partial SBOM visibility does not compromise the confidentiality of hidden components or metadata, while still enabling independent verification of compliance with specified constraints.
The roles are the following:
\begin{figure}[t]
    \centering
    \includegraphics[width=0.8\linewidth]{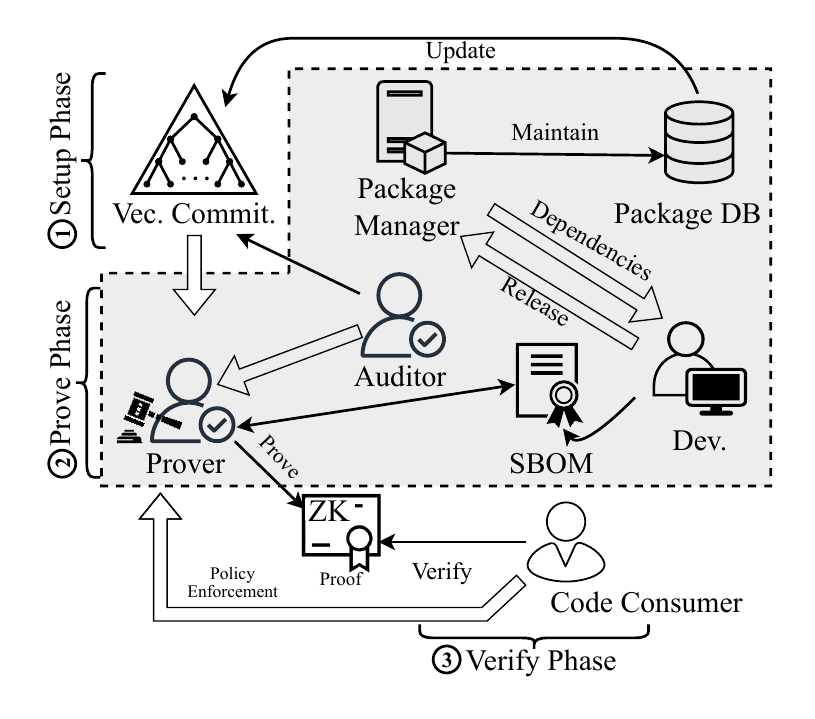}
    \caption{High-level overview of our method. In practice, a single entity may assume multiple roles. 
}  
    \label{fig:sys}
\end{figure}

\begin{itemize}
    \item The \emph{package manager} maintains the package database (or registry) from where the packages/dependencies are downloaded and included in the SBOMs. 
    \item The \emph{auditor} independently checks if the packages satisfy \textit{pre-established} or \textit{on-demand} policy constraints.
    \item The \emph{software developer} develops the software artifacts and generate the SBOM where the software metadata are listed. 
    \item The \emph{prover} generates a privacy-preserving proof against the SBOM. The \prover\ acts as the \textit{SBOM Validator}. The \prover\ can be an external entity (which should be trusted by the software developer) or the software developer itself.
    \item The \emph{consumer} receives the proof (alongside the software artifact) and verifies its validity without learning any data about the SBOM.
\end{itemize}

\paragraph{Trust Assumptions} 
Our model assumes that the package manager and auditor are considered trusted.
All other parties, including the \prover\ are treated as potentially untrusted. 
Although the \prover\ holds the sensitive data, i.e., the SBOM, we assume the worst-case scenario where he deceives the final client.
Hence, the verifier (e.g., the package consumer) must be able to validate SBOM claims only relying on the correctness of the system and not on the honesty of any intermediary.

\paragraph{Security Requirements}
We propose a novel approach that leverages \emph{vector commitments} and \emph{zero-knowledge proof systems} to enable selective disclosure of SBOM entries while allowing third parties to cryptographically verify the authenticity of disclosed components and their compliance with policies, without revealing the full SBOM.
Our method is designed to satisfy the following requirements:

\begin{itemize}
	\item[\textbf{R1}] \emph{Confidentiality:} SBOM sharing should protect sensitive information by allowing selective disclosure of only the data necessary for cryptographic verification.
	
	\item[\textbf{R2}] \emph{Dependency Authenticity:} The mechanism should enable users to verify the authenticity and integrity of disclosed dependencies (e.g., confirming they match the committed data) without relying on trust in the SBOM publisher.
	
	\item[\textbf{R3}] \emph{Policy Compliance:} Users should be able to cryptographically verify that dependencies satisfy security and policy constraints (e.g., licence requirements, version checks, or approved package lists) without requiring full SBOM access or the software's source code.
\end{itemize}

\veribom\ satisfies the security requirements by generating a single Zero-Knowledge proof $\pi$ that attests to the validity of the following assertions:

\begin{itemize}
    \item [\textbf{P1}] \emph{Inclusion Proof (satisfying R1 and R2)}: Users verify that the packages included in an SBOM exist and are immutable entries in the global supply chain state, ensuring they are correctly distributed by the package manager.
    
    \item [\textbf{P2}] \emph{Compliance Proof (satisfying R1 and R3)}: Users verify that the packages included in an SBOM are compliant with a set of policy constraints validated by a security auditor.
\end{itemize}

\section{\veribom\  Description}\label{sec:protocol-overview}

\subsection{Vector Commitment Construction}  
Vector Commitments ($\VC$s) handle the information of the package manager and auditor. A \VC is implemented as a Merkle tree, where each leaf corresponds to a unique entry package or compliance check.
Our \VC structure, like any Merkle tree, is inherently scalable with respect to both proof size and verification complexity. The proof size and the number of hash computations required during verification depend on the height of the tree, which remains small even when accommodating millions of entries. This makes the structure suitable for large-scale software ecosystems, as the height of the tree grows logarithmically with the number of committed packages, i.e., $O(\log n)$. For example, consider a scenario involving one million packages, each with approximately 100 versions, resulting in a total of $2^{27}$ package-version entries. The corresponding Merkle tree would therefore have a height of 27. Assuming a 256-bit (32-byte) hash function (e.g., SHA-256 or Poseidon), the total storage required to represent all internal and leaf nodes would be approximately $2^{27}$ $\times$ 32 bytes, amounting to roughly 4-5 GB. Furthermore, verifying that a particular package-version entry is part of the commitment requires only 26 intermediate hash values (i.e., one per level, excluding the leaf), resulting in a total proof size of 26 $\times$ 32 = 832 bytes, which is less than 1 KB. In addition, the verifier must perform 26 hash computations to validate the proof, which remains computationally efficient even at large scales.

\begin{definition}
Given a database of packages $\packagesDB= \{\package_1, \dots, \package_k\}$, the vector commitment scheme for the package repository is defined as a tuple of algorithms 
$\commitment_\packagesDB = (\setup, \commit, \open, \verify)$ where,

    \begin{itemize}
        \item $\publicParams \gets \setup(\securityParam, D)$: The algorithm initializes the public system parameters $\publicParams$ given a security parameter $\securityParam$ and the  Merkle tree depth $D$, which defines the vector commitment size $N = 2^D$.
        The parameters $\publicParams$ include:
        \begin{itemize}
            \item The specification of the cryptographic hash function.
            \item The fixed tree depth $D$ that ensures a deterministic structure for package indexing.
        \end{itemize}
    
        \item $\digest \gets \commit(\publicParams, \packagesDB)$: The algorithm constructs the commitment to the global state, e.g., the package repository. 
        The algorithm, for each package 
        computes the leaf value and maps it to a specific index \textit{idx}.
        It returns the Merkle Root $\digest$, which serves as the succinct commitment.
    
        \item $\opening \gets \open(idx, \leaf_i, \package_i, \digest)$: Given a package $\package_i$ mapped to index $idx$, this algorithm generates an \textit{opening} \textbf{(P1)} $\opening$. This consists of:
        \begin{itemize}
            \item The Merkle Path (siblings) from the leaf at $idx$ to the root $\digest$.
            \item The pre-image verification data required to reconstruct $\leaf_i$.
        \end{itemize}
    
        \item $b \gets \verify(\digest, \leaf_i, idx, \opening)$: The verification algorithm outputs $b = \True$ if and only if recomputing the Merkle root using $\leaf_i$, the index $idx$, and the path in $\opening$ results in a value equal to $\digest$. Otherwise, it outputs $b = \bot$.
    \end{itemize}
\end{definition}

Unlike traditional approaches that rely on a single data structure, our method employs a \emph{Dual-Tree Architecture} for separating the roles of package manager and auditor, hence the properties of dependency inclusion and policy compliance.
The \emph{package tree} (\PT) is a \VC maintained by the package manager, and it is used for the construction of the inclusion proof \textbf{P1}, certifying that a specific dependency exists in the official repository without revealing it to external parties. 
The \emph{shadow tree} (\ST) is a \VC maintained by the auditor, and it is isomorphic to the \PT, sharing the exact same depth and indices. \ST is used for producing the proofs \textbf{P2} certifying that at a given leaf $j$, the package $\package_{j}$ within the PM is compliant to a set of policy constraints. 
The \textit{Dual-Tree Architecture} allows us to efficiently separate the policy evaluation from the proof generation. 
A way to proceed could have been to encode policy constraints directly within the arithmetic circuit (we used circuit in our implementation and we discuss this later in the implementation part).
However, this \textit{logic-in-circuit} paradigm might not have been scalable and maintainable. 
Every modification to a policy would necessitate re-compiling the circuit, re-distributing it and, in some schemes, re-running the trusted setup. Furthermore, the proof generation time and verification time would have increased significantly.

We utilize a specific variant of Merkle trees, \textit{sparse Merkle trees}~\cite{smtrees}, to force deterministic indexing for handling any positional shift of the packages that may invalidate the correct calculation of the Merkle paths.
For both the cases, to determine the location of a package \package within a tree, we use a simple formula like the following:

\begin{equation}
    i = \mathcal{H}(\text{Name}(\package_j)) \pmod{2^D}
\end{equation}

\noindent where $i$ denotes the sparse tree index and $j$ denotes the sequential package identifier, with $i \neq j$ in the general case, $D$ is the three depth, and $\mathcal{H}$ is the used hash function.
    
This creates a mapping where the index is derived solely from the package identity, and not from the arrival time on the public repository.
We hash the package name with a modular calculation where $D$ is the tree depth (e.g., $D=10$ or $20$). Regardless of how many other packages the package manager add to or remove from the public repository, this ensures that the position of a package is invariant, hence universally identifiable.
Instead, the value stored at each leaf is a cryptographic commitment of each package \textit{x} metadata for the \PT (Eq. \ref{eq:hash}), otherwise the flag value $V$ for the \ST indicating whether the package satisfies the user's security policies (Eq. \ref{eq:hash_compl}):
\begin{eqnarray}
    V_{x}[PT] = \mathcal{H}(\text{Metadata}(\package_x))
    \label{eq:hash}
\\    
    V_{x}[ST] = b \text{ where } b \in \{0, 1\}
    \label{eq:hash_compl}
\end{eqnarray}
where $1$ represents a compliant state and $0$ a non-compliant one.


\subsection{Proof Generation} 
The goal of dependency verification is to ensure that the list of dependencies included in an artifact’s SBOM is authentic and compliant with a set of policy constraints. 
For carrying out \textbf{P1} and \textbf{P2}, in \veribom\ we designed a zero-knowledge proof construction to generate verifiable claims in which the \prover\ demonstrates that all dependencies of a package are authentic (i.e., included in the \PT) and compliant with specified policies (i.e., included in the \ST), without revealing any sensitive metadata. 

\paragraph{Policy Constraints}
When a cryptographic proof related to a confidential SBOM is being generated, the authenticity of the dependencies and the compliance with the policy constraints should be strictly bounded. 
Therefore, if a verifier requires some policy constraints to be respected, although the software dependency is authentic but not compliant the verification of the proof must fail.
For accomplishing this, \veribom\ leverages a \textit{Dual-Membership Constraint}, i.e., for any verification step $k$, the \prover\ must demonstrate validity in both \PT and \ST.

More formally, given $\digest_{PT}$ as the public root of \PT and $\digest_{ST}$ be the public root of a \ST, for a package $\package$ at index $idx$, we define the validity predicate as follows:

\begin{align}\label{eq:zkp-dual}
    & \mathsf{ValidDep}(\package, \digest_{PT}, \digest_{ST}) = \nonumber \\
    & \quad \exists\ path_{PT}, path_{ST}\ \text{such that} \nonumber \\
    & \quad \verify(\digest_{PT}, \hashFun(\package), idx, path_{PT}) = \True \ \wedge \\
    & \quad \verify(\digest_{ST}, \mathbf{1}, idx, path_{ST}) = \True
\end{align}

\noindent where:
\begin{itemize}
    \item $path_{PT}$ is the Merkle Path that links a package to the \PT root $\digest_{PT}$.
    \item $path_{ST}$ is the Merkle Path that links a compliance value $\mathbf{1}$ to the \ST root $\digest_{ST}$ at index $idx$.
\end{itemize}

If the leaf in \ST is flagged as $0$ (Non-Compliant), the hashing sequence initiated with the forced value $1$ will diverge, and the root of the proof will result different from the public $\digest_{ST}$, causing the constraint to fail.

Since an SBOM contains $N$ packages, \veribom\ leverages the cryptographic folding scheme to aggregate multiple checks efficiently. 
The predicate $\mathsf{ValidDep}$ constitutes a single atomic step applied iteratively. 
At each step $k$, the folding scheme aggregates the validity of the step $k$ into a preexisting \textit{accumulator} $W_{k-1}$ (which holds the validity of the previous step $k-1$), producing a new accumulator $W_k$. 
The final proof $\pi$ attests that $\forall k \in [0, N-1], \mathsf{ValidDep}(\dots)$ holds true. 
\textbf{P1} and \textbf{P2} are ensured with constant-size verification cost.

\paragraph{Folding for Scalable Proof Generation}
In the proof construction, the \prover\ generates an opening for each dependency’s inclusion in the \VC and proves that it satisfies the policy constraints. These constraints are automatically enforced within the ZKP proof logic, removing the need for manual policy checks during verification. As a result, the final proof attests that (i) all dependencies are authentic, (ii) the policy constraints are satisfied, and (iii) the proof corresponds to the specific list of dependencies for the artifact. This enables privacy-preserving dependency validation, and enables downstream verifiers to check compliance without learning the specific dependencies involved and their metadata. 

To efficiently generate a proof and verify it for an arbitrary number of dependencies in practice, we employ folding-based zkSNARKs to address scalability issues related to proof size. As software ecosystems grow, the number of dependencies in a project can increase, reaching hundreds or even thousands. Thus, generating and verifying separate proofs for each dependency would be very expensive. Using a folding-based scheme, the \prover\ iteratively ``folds'' multiple proofs into a single compact proof, ensuring that the same verification rules apply to all dependencies while reducing proof size. 
An illustration on real world packages of the folding mechanism with the dual-membership constraint is provided in Figure~\ref{fig:trees_folding}.
In each proof step, corresponding to a specific dependency, this dependency is first verified against the public roots. 
Then, the result for this step is obtained by concatenating with the results of the previous step, and in turn is used as input to the next step. 
This process compresses multiple checks into a compact, succinct representation.
The verifier only needs to validate the final folded proof, which contains the verification results for all dependencies.

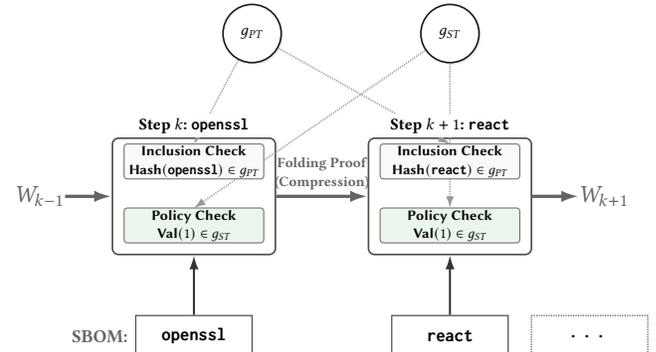
\begin{figure}[htbp]
    \centering
    \resizebox{\columnwidth}{!}{%
        \input{Figure/folding}
    }
    \caption{When at step \textit{k} a new package is retrieved from the SBOM, the proof is generated from the results of the previous step \textit{k-1}. The results of step \textit{k} are then employed for the step \textit{k+1}, hence for the next package. In each step, the proof is built considering the public roots.}
    \label{fig:trees_folding}
\end{figure}

\subsection{Policy Propagation Engine}
In a package repository, the packages may contain a great number of dependencies; manually and consistently applying the policy constraints can result very impractical.
Considering this scenario, to support \textbf{P2}, we designed a \textit{policy propagation engine} (\PPE) that for each package processes its graph of dependencies.
The \PPE is a layer for the activity of the Auditor that automatically flags the compliance status of dependent packages. 

\begin{definition}
Let $\mathcal{D}(p)$ be the set of the dependencies $d$ for a package $p$. 
The \textit{\ST compliance status} $\mathcal{C}(p) \in \{0, 1\}$ is the product of a package's compliance status with respect to a policy constraint $\mathcal{L}(p) \in \{0, 1\}$ and the compliance of all its dependencies $\mathcal{C}(d)$:

\begin{equation}\label{eq:propagation}
    \mathcal{C}(p) = \mathcal{L}(p) \cdot \prod_{d \in \mathcal{D}(p)} \mathcal{C}(d)
\end{equation}
\end{definition}

If a single dependency $d$ in the graph of dependencies has $\mathcal{C}(d)=0$, the zero propagates to the upper hierarchy, making $\mathcal{C}(p)=0$, thanks to the product operator $\prod$.
First the Auditor manually selects specific packages that are non-compliant with the established policy constraints; then, the \PPE traverses the dependency graph and upwards to all parent packages propagates the non-compliance status. 


\paragraph{Multiple Policy Constraints}
A user may want to check single and specific policy constraints rather than their combination (e.g., a user wants to prove only the absence of vulnerabilities, or the adhesion to a specific license, or both).
A key feature of \textsc{VeriBOM} is its ability to support heterogeneous policy constraints simultaneously. 
Since a leaf contains only a value 1/0, a single \ST is insufficient in handling multiple policy constraints. 
Within \veribom, the Auditor can maintain a \textit{forest} of $\ST$s, where each tree corresponds to a specific policy $\Pi$ (e.g., $\Pi_{Sec}$ for vulnerabilities, $\Pi_{Lic}$ for licenses).

\begin{figure}[htbp]
    \centering
    \resizebox{\columnwidth}{!}{%
        \input{Figure/shadow_trees}
    }
    \caption{Hierarchical propagation of policy constraints. The diagram illustrates how a local policy violation ($\mathcal{L}=0$) in a dependency ($P_A$) within a specific policy constraint propagates among other constraints, resulting in a non-compliant root package ($P_R$).}
    \label{fig:propagation}
\end{figure}
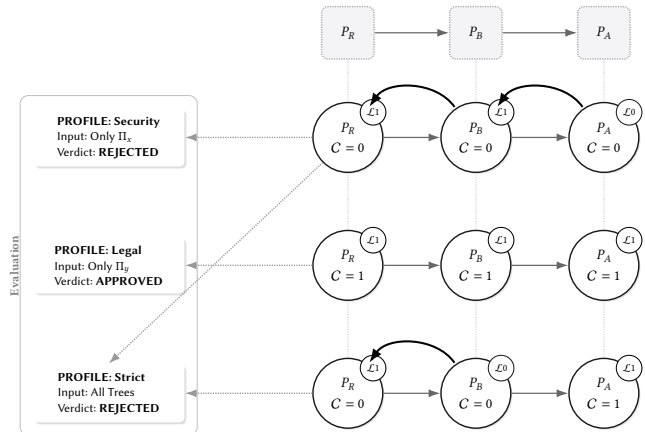

While the PM Tree remains immutable, the \PPE generates distinct $\ST$s compliance states for each profile:
\begin{equation}
    \mathcal{C}(p, \Pi) = \mathcal{L}(p, \Pi) \cdot \prod_{d \in \mathcal{D}(p)} \mathcal{C}(d, \Pi)
\end{equation}

Furthermore, to support complex requirements i.e., ensuring a package is compliant to combination of policy constraints, the \PPE allows for \textit{policy composition}. 
Let $\mathbf{\Psi} = \{\Pi_1, \dots, \Pi_n\}$ be a set of policy constraints.
The compliance state for this composition is derived by computing the intersection of all individual compliance states:

\begin{equation}\label{eq:composition}
    \mathcal{C}(p, \mathbf{\Psi}) = \prod_{\Pi_k \in \mathbf{\Psi}} \mathcal{C}(p, \Pi_k)
\end{equation}

Figure \ref{fig:propagation} describes the propagation mechanism along with a set of policy constraints.

During the Auditing phase, the Auditor publishes a distinct root $\digest_{ST}^{(k)}$ for each $\Pi_k$.
Instead, in the verification phase, the Client chooses $\Pi_k$ and retrieves the corresponding public root $\digest_{ST}^{(k)}$, against which the ZK proof is then verified.

\paragraph{Client-on-demand Aggregation of Policy Constraints}
A user may require specific policy constraints that have not been covered by the auditor. Within \veribom\, we consider this dynamic case as \textit{client-on-demand aggregation}.
In the default setting, the auditor proactively maintains $\Psi$ that are of common requests. 
The client specifies a composite requirement $\mathbf{\Psi}_{\text{Target}} = \Psi_{\text{Auditor}} \cup \Psi_{\text{Client}}$.
A transient \ST root $\digest_{ST}^{(\text{Target})}$ is generated by the union of the policy constrains. A package $p$ is compliant if and only if it satisfies the entirety of the policy constraints $\mathcal{C}(p, \mathbf{\Psi}_{\text{Target}})$: 
    \begin{equation}
    \mathcal{C}(p, \mathbf{\Psi}_{\text{Target}}) = 
    \underbrace{\left( \prod_{\Pi \in \mathbf{\Psi}_{\text{Auditor}}}\mathcal{C}(p, \Pi) \right)}_{\text{Auditor's}} 
    \cdot 
    \underbrace{\left( \prod_{\Pi \in \mathbf{\Psi}_{\text{Client}}} \mathcal{C}(p, \Pi) \right)}_{\text{Client's}}
\end{equation}

To ensure unambiguous verification, we define the compliance with policy constraints using Datalog semantics.
$\mathcal{C}(p, \mathbf{\Psi}_{\text{Target}})$ is calculated as logical conjuction of the singular policy constraints.

Let $\mathbf{\Psi}_{\text{Target}} = \{\Pi_1, \Pi_2, \dots, \Pi_n\}$ be the set of the policy constraints. We express the aggregation rule as:

\begin{equation}
    \mathcal{C}(p, \mathbf{\Psi}_{\text{Target}}) \leftarrow \Pi_1(p) \wedge \Pi_2(p) \wedge \dots \wedge \Pi_n(p)
\end{equation}

where each literal $\Pi_k(p)$ evaluates to true if and only if package $p$ is flagged as compliant ($1$) in the specific Shadow Tree corresponding to policy constraint $\Pi_k$.

\paragraph{Accountability via Root Binding}
As described in the threat model part, a software vendor may act as a malicious vendor ($\mathcal{A}_V$), i.e., a vendor that in an SBOM adds some packages not authentic nor compliant.
It is crucial to note that the ZK-SNARK soundness does not prevent $\mathcal{A}_V$ from building a proof that is based on different \PT and \ST. $\mathcal{A}_V$ could theoretically clone the structure of the package repository, build its own \textbf{$PT^*$} and \textbf{$ST^*$}, and derive divergent roots $\digest'_{PT^*}$ and $\digest'_{ST^*}$.
Consequently, the Vendor can successfully generate a Zero-Knowledge proof $\pi_{*}$ that is mathematically valid with respect to these divergent roots.

However, the security of \textsc{VeriBOM} relies on the binding of the verification to the trusted anchors.
In Phase 3, the Client ignores $\digest'_{PT^*}$ and $\digest'_{ST^*}$ provided by the software vendor and retrieves the public roots $\digest_{PT}$ and $\digest_{ST}$ directly from the trusted entities.
The verification fails due to roots mismatch:
\begin{equation}
    \verify(\pi_{*}(\digest'_{PT^*},\digest'_{ST^*}), \digest_{PT}, \digest_{ST}) = \False
\end{equation}
Since the proof $\pi_{*}$ is bound to the inputs used to generate it ($\digest'_{*}$), the verification phase of \veribom\ rejects the proof.
This forces strong accountability: the vendor cannot create a parallel-but-correct reality as the generated proof does not align with the public roots.

\subsection{Dependencies Verification}
When a consumer or any downstream user plans to use software, they can verify the authenticity of its dependencies to ensure no compromise and compliance with declared policies without exposing sensitive information, such as secret dependencies, by using the zero-knowledge proofs generated by \veribom. These proofs are used to verify conformance to security and privacy policies defined by stakeholders, and can be independently verified by anyone, e.g., consumers or regulators. 
The public SBOM may include public data such as the software identifier, the hash of the dependency list, license information, and allow-lists. This information in addition to the proof, which is a hash value, is used as input to the $verify\_zkp\_proof$ algorithm to check the truthfulness of the corresponding statement.

\subsection{Operational Workflow}
The operational lifecycle of \veribom\ orchestrates the interaction between trusted infrastructure (Package Manager and Auditor) and supply chain actors (Vendor and Client) through three sequential phases, as formalized in Algorithm~\ref{alg:zksbom_protocol}.
Figure \ref{fig:sbom_obfuscation} clarifies the transition from a full public SBOM to an SBOM with confidential information.

\paragraph{Setup Phase.}
\veribom\ begins with the setup the global state. 
The package manager populates the \textit{Package Tree} ($PT$) with commitments to each package, producing a vector commitment.
The \textbf{Auditor} based on instantiated specific policy constraints generates an isomorphic \textit{Shadow Tree} ($ST_{\mathbf{\Psi}}$) where each leaf contains the propagated compliance status of the corresponding package.
This setup phase is concluded with the publication of the roots $\digest_{PT}$, which handles \textbf{P1}, and $\digest_{ST}^{\mathbf{\Psi}}$, which handles \textbf{P2}.

\paragraph{Prove Phase.}
In this phase, the software vendor, or the \prover\ (the software vendor can act as a \prover\ too), generates a cryptographic proof for protecting the confidentiality of their SBOM. 
The \prover\ iterates through the software's dependency graph described in the SBOM, selects the dependencies to be hidden, and generates a Zero-Knowledge proof locally.
Using a recursive folding scheme, a succinct proof $\pi$ is produced. 
Therefore, the software developer preserves the confidentiality of its SBOM, by publishing only the proof $\pi$ attesting that the SBOM satisfies \textbf{P1} and \textbf{P2}.

\paragraph{Verify Phase}
In the last phase, the client (or code consumer) validates the software artifact by not dissecting an SBOM but instead verifying that the received proof $\pi$ is valid against the public roots $\digest_{PT}$ and $\digest_{ST}^{\mathbf{\Psi}}$, selecting the policy constraints to be verified.
If $\pi$ holds, the code consumer obtains a mathematical guarantee that the SBOM of the vendor, hence the hidden dependency graph, is fully consistent and compliant with respect to the selected policy constraints.
The validity of the proof authenticates the software artifact without revealing any confidential information.

\begin{algorithm}[htbp]
\SetKwInput{KwData}{Input}
\SetKwInput{KwResult}{Output}
\SetKwFunction{Hash}{Hash}
\SetKwFunction{MerkleRoot}{MerkleRoot}
\SetKwFunction{MerklePath}{MerklePath}
\SetKwFunction{VerifyProof}{Verify}
\SetKwFunction{Fold}{Fold}
\SetKwFunction{Compress}{Compress}
\SetKw{True}{true}
\SetKw{False}{false}

\caption{ZkSBOM Supply Chain Verification Protocol}
\label{alg:zksbom_protocol}

\tcc{--- Setup Phase ---}
\KwData{Repository $\mathcal{R} = \{P_1, \dots, P_N\}$, Policy Set $\mathbf{\Psi}$}
\KwResult{Public Roots $\digest_{PT}, \digest_{ST}^{\mathbf{\Psi}}$}

\tcc{Step 1.1: Package Manager (Identity)}
Initialize tree $PT$\;
\For{$i \leftarrow 0$ \KwTo $N-1$}{
    $P_i \leftarrow \mathcal{R}[i]$\;
    $h_i \leftarrow \Hash(P_i.\text{name}, P_i.\text{ver}, P_i.\text{lic})$ \tcp*{Identity Commitment}
    Insert leaf $h_i$ into $PT$ at index $i$\;
}
$\digest_{PT} \leftarrow \MerkleRoot(PT)$\;
Publish $\digest_{PT}$\;

\BlankLine
\tcc{Step 1.2: Auditor (via \PPE)}
Initialize Shadow Tree $ST_{\mathbf{\Psi}}$\;
\For{$i \leftarrow 0$ \KwTo $N-1$}{
    \tcp{Calc recursive status (Eq. \ref{eq:propagation})}
    $s_i \leftarrow \prod_{k \in \mathbf{\Psi}} \mathcal{C}(P_i, \Pi_k)$\;
    Insert leaf $s_i$ into $ST_{\mathbf{\Psi}}$ at index $i$\;
}
$\digest_{ST}^{\mathbf{\Psi}} \leftarrow \MerkleRoot(ST_{\mathbf{\Psi}})$\;
Publish $\digest_{ST}^{\mathbf{\Psi}}$\;

\BlankLine
\tcc{--- Prove Phase: \prover\ (Vendor) ---}
\KwData{Dependencies $\mathcal{D}=\{d_1 \dots d_K\}$, Trees $PT, ST_{\mathbf{\Psi}}$}
\KwResult{Succinct Proof $\pi$}

Initialize recursive proof state $\pi_0$\;
\For{$j \leftarrow 1$ \KwTo $K$}{
    Identify index $idx$ of $d_j$ in $PT$\;
    $path_{PT} \leftarrow \MerklePath(PT, idx)$\;
    $path_{ST} \leftarrow \MerklePath(ST_{\mathbf{\Psi}}, idx)$\;
    
    \tcc{Fold step checking P1 and P2}
    $\pi_j \leftarrow \Fold(\pi_{j-1}, d_j, path_{PT}, path_{ST}, \digest_{PT}, \digest_{ST}^{\mathbf{\Psi}})$\;
}
\Return $\pi \leftarrow \Compress(\pi_K)$\;

\BlankLine
\tcc{--- Verify Phase (Client) ---}
\KwData{Proof $\pi$, Trusted Roots $\digest'_{PT}, \digest'^{\mathbf{\Psi}}_{ST}$}
\KwResult{Accept or Reject}

\tcp{Verification binds proof to public roots}
$Valid \leftarrow \VerifyProof(\pi, \digest'_{PT}, \digest'^{\mathbf{\Psi}}_{ST})$\;
\eIf{$Valid$ is \True}{
    \Return \textbf{Accept} \tcp*{Authentic \& Compliant}
}{
    \Return \textbf{Reject} \tcp*{Root Mismatch or Policy Fail}
}
\end{algorithm}

\begin{figure}[t]
    \centering
    \begin{minipage}[t]{0.45\columnwidth}
        \begin{lstlisting}[language=json, basicstyle=\ttfamily\scriptsize, frame=single, numbers=none, caption={Internal SBOM}, label={lst:sbom.private}]
{
  "name": "banking-core",
  "version": "2.4.1",
  "dependencies": [
    {
      "name": "tokio",
      "version": "1.28.0",
      "src": "crates.io"
    },
    {
      "name": "log4rs",
      "version": "1.2.0",
      "src": "crates.io"
    }
  ]
}
        \end{lstlisting}
    \end{minipage}
    \hfill
    \begin{minipage}[t]{0.45\columnwidth}
        \begin{lstlisting}[language=json, basicstyle=\ttfamily\scriptsize, frame=single, numbers=none, caption={Public SBOM}, label={lst:sbom.public}]
{
  "name": "banking-core",
  "version": "2.4.1",
  "dependencies": [
    {
      "name": "tokio",
      "version": "1.28.0",
      "src": "crates.io"
    },
    {
      "commitment": "0x7f8a...",
      "type": "zk-hidden"
    }
  ],
  "zk_proof": "pi_snark_0x99a1..."
}
        \end{lstlisting}
    \end{minipage}
    
    \caption{Comparison of SBOM visibility: software vendor view (left) vs. verifier view (right) after \veribom application.}
    \label{fig:sbom_obfuscation}
\end{figure}

%% file: Sec-ThreatModel.tex
\subsection{Threat Model}
\label{sec:tm}

\paragraph{Confidentiality Requirements of SBOMs}
Traditional SBOMs mandate full transparency, however we identify the following critical cases requiring confidentiality:

\begin{itemize}
    \item \emph{Intellectual property protection} In software development, distinct delivery models exist. In proprietary software, the specific composition of internal dependencies is to be considered confidential. In such a case, the consumer receives a self-contained artifact where transitive dependencies are implicitly trusted \cite{10.5555/3361338.3361407}. The same goes for vendors that curate databases of non-public vulnerabilities (e.g., zero-days). 
    
    \item \emph{Components unpatching} Developers may include outdated or vulnerable elements in the supply chain that cannot be patched and that, for some technological and structural reason, are continuously being used by the supply chain \cite{10.5555/3361338.3361435}.
    Also, empirical studies in software engineering have found that update propagation is slow. Some research \cite{10.1007/s10664-017-9521-5} indicates that up to 69\% of developers are unaware of their vulnerable dependencies, and developers are less likely to migrate their library dependencies, with up to 81.5\% of systems keeping outdated dependencies because many developers fear that library updates may introduce breaking changes.
    
    \item \emph{Business distrust} In multi-tier supplychains (where  a network of software suppliers exists),  Tier-1 suppliers generally integrate components from specialized sub-suppliers (Tier-2+). 
    A significant commercial risk is that a transparent SBOM may lead to the disintermediation of Tier-1 suppliers, as customers could attempt to bypass them and engage directly with sub-suppliers.
\end{itemize}

\paragraph{Attackers in SBOMs}
Based on the operational scenarios identified above, we abstract the following classes of attackers with different capabilities: 

\begin{itemize}
    \item The \emph{targeting attacker (\tattacker)} acts as a malicious observer by performing \textit{fingerprinting} and learning the dependencies and/or vulnerabilities within the plaintext SBOM. By analyzing the SBOM, \tattacker may reduce the search space for targeting known exploits. 

    \item The \emph{competitor attacker (\cattacker)} acts as a business competitor or a supply chain partner aiming for commercial gain. By analyzing the SBOM, \cattacker may reconstruct the business graph to bypass intermediaries.

    \item The \emph{malicious vendor (\mattacker)}, mainly internal at the software processes. \mattacker acts as a malicious prover by claiming false properties about the SBOM, such as vulnerable or outdated packages, or by not complying with the security policies requested for the final approval of the delivered software.
\end{itemize}

%% file: Figure/folding.tex
\pgfdeclarelayer{bg}
\pgfsetlayers{bg,main}

\begin{tikzpicture}[
    every node/.style={font=\bfseries\fontsize{40}{50}\selectfont},
    root node/.style={
        circle, draw=black!90, line width=4pt, minimum size=5cm, 
        align=center, fill=white,
        drop shadow={opacity=0.15, shadow xshift=5pt, shadow yshift=-5pt}
    },
    step box/.style={
        rectangle, rounded corners=15pt, draw=black!70, fill=white,
        line width=4pt, 
        minimum width=14cm, minimum height=10cm, 
        align=center
    },
    verify block/.style={
        rectangle, draw=black!60, fill=gray!5,
        minimum width=12cm, minimum height=3cm, 
        font=\sffamily\bfseries\fontsize{35}{45}\selectfont, 
        align=center, rounded corners=8pt,
        line width=2pt
    },
    tape item/.style={
        rectangle, draw=black!70, fill=white,
        minimum width=10cm, minimum height=3.5cm, 
        font=\ttfamily\bfseries\fontsize{45}{55}\selectfont, 
        line width=3pt
    },
    input arrow/.style={-{Latex[length=12mm, width=8mm]}, line width=5pt, color=black!80},
    acc arrow/.style={-{Latex[length=15mm, width=10mm]}, line width=8pt, color=accColor},
    const arrow/.style={-{Latex[length=10mm, width=7mm]}, line width=4pt, dashed, color=black!40}
]

\node[step box] (Step1) at (0, 0) {};
\node[step box] (Step2) at (22, 0) {}; 

\node[root node] (RootPT) at (5, 14) {$\digest_{PT}$};
\node[root node] (RootST) at (22, 14) {$\digest_{ST}$};

\draw[const arrow] (RootPT) -- (0, 3 + 1.5); 
\draw[const arrow] (RootST) -- (0, -2.5 + 1.5);

\draw[const arrow] (RootPT) -- (22, 3 + 1.5); 
\draw[const arrow] (RootST) -- (22, -2.5 + 1.5);

\node[above, fill=white, inner sep=10pt] at (Step1.north) {Step $k$: \texttt{openssl}};
\node[above, fill=white, inner sep=10pt] at (Step2.north) {Step $k+1$: \texttt{react}};

\node[verify block] (VerPT_1) at (0, 3) {Inclusion Check\\$\text{Hash}(\texttt{openssl}) \in \digest_{PT}$};
\node[verify block, fill=stGreen!10] (VerST_1) at (0, -2.5) {Policy Check\\$\text{Val}(1) \in \digest_{ST}$};

\node[verify block] (VerPT_2) at (22, 3) {Inclusion Check\\$\text{Hash}(\texttt{react}) \in \digest_{PT}$};
\node[verify block, fill=stGreen!10] (VerST_2) at (22, -2.5) {Policy Check\\$\text{Val}(1) \in \digest_{ST}$};

\node[tape item] (Tape1) at (0, -12) {\texttt{openssl}};
\node[tape item] (Tape2) at (22, -12) {\texttt{react}};
\node[tape item, dashed] (Tape3) at (34, -12) {\dots};

\node[left=0.5cm of Tape1, color=black!60, font=\bfseries\fontsize{45}{55}\selectfont] {SBOM:};

\draw[input arrow] (Tape1) -- (Step1.south);
\draw[input arrow] (Tape2) -- (Step2.south);

\node[left=4cm of Step1, color=accColor, font=\bfseries\fontsize{60}{70}\selectfont] (AccIn) {$W_{k-1}$};
\draw[acc arrow] (AccIn) -- (Step1);

\draw[acc arrow] (Step1) -- node[midway, above, color=accColor, align=center, yshift=10pt, font=\bfseries\fontsize{35}{45}\selectfont] 
    {Folding Proof\\(Compression)} (Step2);

\node[right=4cm of Step2, color=accColor, font=\bfseries\fontsize{60}{70}\selectfont] (AccOut) {$W_{k+1}$};
\draw[acc arrow] (Step2) -- (AccOut);

\end{tikzpicture}

%% file: Figure/shadow_trees.tex
\pgfdeclarelayer{bg}
\pgfsetlayers{bg,main}

\begin{tikzpicture}[
    every node/.style={font=\bfseries\fontsize{35}{45}\selectfont},
    shadow node/.style={
        circle,
        draw=black!90,
        line width=3.5pt,
        minimum size=6cm,
        align=center,
        fill=white,
        drop shadow={opacity=0.2, shadow xshift=4pt, shadow yshift=-4pt}
    },
    physical node/.style={
        rectangle, rounded corners=15pt, draw=physGray!80, fill=physGray!10,
        line width=3.5pt, 
        minimum size=4.5cm,
        align=center, 
        dashed
    },
    badge/.style={
        circle, draw=black!90, line width=2.5pt, 
        minimum size=2.2cm,
        font=\fontsize{25}{30}\selectfont\bfseries,
        fill=white, inner sep=0pt,
        anchor=center,
        drop shadow={opacity=0.15, shadow xshift=2pt, shadow yshift=-2pt}
    },
    verdict node/.style={
        rectangle, rounded corners=12pt, line width=3pt,
        minimum width=13cm, minimum height=4.5cm, 
        align=left, 
        font=\sffamily\fontsize{30}{38}\selectfont,
        drop shadow={opacity=0.25, shadow xshift=5pt, shadow yshift=-5pt},
        inner sep=20pt, anchor=east,
        fill=white
    },
    badge compliant/.style={text=black, draw=black},
    badge noncompliant/.style={text=black},
    compliant/.style={text=black},
    noncompliant/.style={},
    dep arrow/.style={-{Latex[length=10mm, width=7mm]}, line width=3pt, color=black!60},
    prop arrow/.style={-{Latex[length=12mm, width=9mm]}, line width=5pt, shorten <=4pt, shorten >=4pt},
    map arrow/.style={-, line width=3pt, color=physGray!40, dashed},
    result arrow/.style={-{Latex[length=10mm, width=7mm]}, line width=3.5pt, color=black!40, dashed}
]

\def\xRes{-14}   
\def\xRoot{0}    
\def\xMid{11}    
\def\xDeep{22}   

\def\yPhys{9}    
\def\ySec{0}     
\def\yLic{-11}   
\def\yDep{-22}   

\node[physical node] (P_Root) at (\xRoot, \yPhys) {$P_R$};
\node[physical node] (P_Mid)  at (\xMid, \yPhys)  {$P_B$};
\node[physical node] (P_Deep) at (\xDeep, \yPhys) {$P_A$};
\node[anchor=south west, color=physGray!80, font=\bfseries\fontsize{40}{50}\selectfont] at (-6, \yPhys+2) {};

\node[shadow node, noncompliant] (S_Root) at (\xRoot, \ySec) {$P_R$\\[5pt]$\mathcal{C}=0$};
\node[badge, badge compliant] at (S_Root.45) {$\mathcal{L}1$};
\node[shadow node, noncompliant] (S_Mid) at (\xMid, \ySec) {$P_B$\\[5pt]$\mathcal{C}=0$};
\node[badge, badge compliant] at (S_Mid.45) {$\mathcal{L}1$};
\node[shadow node, noncompliant] (S_Deep) at (\xDeep, \ySec) {$P_A$\\[5pt]$\mathcal{C}=0$};
\node[badge, badge noncompliant] at (S_Deep.45) {$\mathcal{L}0$};

\node[shadow node, compliant] (L_Root) at (\xRoot, \yLic) {$P_R$\\[5pt]$\mathcal{C}=1$};
\node[badge, badge compliant] at (L_Root.45)  {$\mathcal{L}1$};
\node[shadow node, compliant] (L_Mid) at (\xMid, \yLic) {$P_B$\\[5pt]$\mathcal{C}=1$};
\node[badge, badge compliant] at (L_Mid.45) {$\mathcal{L}1$};
\node[shadow node, compliant] (L_Deep) at (\xDeep, \yLic) {$P_A$\\[5pt]$\mathcal{C}=1$};
\node[badge, badge compliant] at (L_Deep.45) {$\mathcal{L}1$};

\node[shadow node, noncompliant] (D_Root) at (\xRoot, \yDep) {$P_R$\\[5pt]$\mathcal{C}=0$};
\node[badge, badge compliant] at (D_Root.45){$\mathcal{L}1$};
\node[shadow node, noncompliant] (D_Mid) at (\xMid, \yDep) {$P_B$\\[5pt]$\mathcal{C}=0$};
\node[badge, badge noncompliant] at (D_Mid.45)  {$\mathcal{L}0$};
\node[shadow node, compliant] (D_Deep) at (\xDeep, \yDep) {$P_A$\\[5pt]$\mathcal{C}=1$};
\node[badge, badge compliant] at (D_Deep.45){$\mathcal{L}1$};

\node[verdict node] (Res_Sec) at (\xRes, \ySec) {
    \textbf{PROFILE: Security}\\
    Input: Only $\Pi_{x}$\\
    Verdict: \textbf{REJECTED}
};
\node[verdict node] (Res_Lic) at (\xRes, \yLic) {
    \textbf{PROFILE: Legal}\\
    Input: Only $\Pi_{y}$\\
    Verdict: \textbf{APPROVED}
};
\node[verdict node] (Res_Strict) at (\xRes, \yDep) {
    \textbf{PROFILE: Strict}\\
    Input: All Trees\\
    Verdict: \textbf{REJECTED}
};

\draw[dep arrow] (P_Root) -- (P_Mid);
\draw[dep arrow] (P_Mid)  -- (P_Deep);
\draw[dep arrow] (S_Root) -- (S_Mid);
\draw[dep arrow] (S_Mid) -- (S_Deep);
\draw[dep arrow] (L_Root) -- (L_Mid);
\draw[dep arrow] (L_Mid) -- (L_Deep);
\draw[dep arrow] (D_Root) -- (D_Mid);
\draw[dep arrow] (D_Mid) -- (D_Deep);

\draw[prop arrow] (S_Deep) to[bend right=55] (S_Mid);
\draw[prop arrow] (S_Mid) to[bend right=55] (S_Root);
\draw[prop arrow] (D_Mid) to[bend right=55] (D_Root);

\draw[result arrow] (S_Root) -- (Res_Sec);
\draw[result arrow] (L_Root) -- (Res_Lic);
\draw[result arrow] (D_Root) -- (Res_Strict);
\draw[result arrow] (S_Root.south west) --(Res_Strict.north);

\begin{pgfonlayer}{bg}
    \draw[map arrow] (P_Root.center) -- (S_Root.center);
    \draw[map arrow] (S_Root.center) -- (L_Root.center);
    \draw[map arrow] (L_Root.center) -- (D_Root.center);
    \draw[map arrow] (P_Mid.center) -- (S_Mid.center);
    \draw[map arrow] (S_Mid.center) -- (L_Mid.center);
    \draw[map arrow] (L_Mid.center) -- (D_Mid.center);
    \draw[map arrow] (P_Deep.center) -- (S_Deep.center);
    \draw[map arrow] (S_Deep.center) -- (L_Deep.center);
    \draw[map arrow] (L_Deep.center) -- (D_Deep.center);
    
    \node[draw=gray!50, rounded corners=20pt, fit=(Res_Sec) (Res_Lic) (Res_Strict), inner sep=30pt] (DashBox) {};
    \node[rotate=90, anchor=south, color=black!50, font=\bfseries\fontsize{30}{40}\selectfont] at (DashBox.west) {Evaluation};
\end{pgfonlayer}

\end{tikzpicture}

%% file: Sec4-Soundness.tex
\section{Security Analysis}
\label{sec:security-analysis}
In this section, we analyze the properties that \veribom\ guarantees against a generic PPT adversary \attacker.
We consider a {probabilistic polynomial-time (PPT)} adversary $\mathcal{A}$, i.e., an algorithm running in polynomial time with respect to the security parameter and capable of making random choices. 
This means that $\mathcal{A}$ may use {randomized strategies}, i.e., using randomness during its execution to increase its chances of success. The adversary is assumed to have the capability to observe, intercept, and modify any communication between system entities. It may inspect the SBOM, including partial disclosures, and attempt to forge proofs or obscure the presence of malicious dependencies. 
{Furthermore, we assume that the adversary may compromise software components, including those within our proposed framework, therefore, gaining full control over their execution.

\paragraph{Integrity of the Package Tree}
We must ensure that in the package tree ($PT$), \attacker cannot forge a package once committed at index $idx$.

\begin{theorem}[Package Tree Position-Binding]\label{thm:pack.repo.integrity}
    \attacker cannot open a leaf at index $idx$ to a value $v'$ different from $v$, as the originally committed value.
\end{theorem}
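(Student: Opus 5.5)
We argue by reduction to the collision resistance of the hash functions \hashFun\ and \structuredHashFunc\ used to build the package tree \PT. Suppose a PPT adversary \attacker\ outputs, for the published root $\digest_{PT}$ and a fixed index $idx$, an opening $(v', \opening')$ with $v' \neq v$ and $\verify(\digest_{PT}, \structuredHashFunc(v'), idx, \opening') = \True$, where $v$ is the value actually committed at $idx$. Since the commitment is public and deterministic, the honest opening $(v, \opening)$ with $\opening = (h_1,\dots,h_D)$ can be computed by anyone. We build an adversary $\mathcal{B}$ that runs \attacker, computes $\opening$ itself, and uses both openings to output a collision with essentially the same success probability.

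The core step is a path comparison. Because the tree is a sparse Merkle tree of fixed depth $D$, the index $idx$ determines the left/right orientation at every level identically for both openings. Define the recomputed node values $x_0 = \structuredHashFunc(v)$, $x'_0 = \structuredHashFunc(v')$, and $x_i = \hashFun(x_{i-1}\parallel h_i)$ or $\hashFun(h_i \parallel x_{i-1})$ according to the orientation bit. Define $x'_i$ the same way using the adversary's siblings $h'_i$. Both sequences end at $x_D = x'_D = \digest_{PT}$.

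We then do a case split.
\begin{itemize}
\item If $\structuredHashFunc(v) = \structuredHashFunc(v')$, then $(v, v')$ is a collision for \structuredHashFunc, since $v \neq v'$.
\item Otherwise $x_0 \neq x'_0$ while $x_D = x'_D$. Let $i$ be the smallest level with $x_i = x'_i$, so $x_{i-1} \neq x'_{i-1}$. The two hash inputs at level $i$ are then distinct strings, because they differ in the $x_{i-1}$ component, yet they map to the same output. This gives a collision for \hashFun.
\end{itemize}

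Hence $\Pr[\attacker \text{ wins}] \le \mathrm{Adv}^{\mathrm{cr}}_{\structuredHashFunc}(\mathcal{B}) + \mathrm{Adv}^{\mathrm{cr}}_{\hashFun}(\mathcal{B})$, which is negligible in \securityParam.

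The main obstacle is making the level-$i$ step rigorous: distinct inputs must really be distinct strings. This requires the concatenation $\parallel$ to be unambiguous, i.e.\ fixed-length node encodings, so that $a\parallel b = a'\parallel b'$ forces $a=a'$ and $b=b'$. Two related ambiguities must also be ruled out:
\begin{itemize}
\item leaf and internal hash domains must be separated, or at least \structuredHashFunc\ and \hashFun\ must operate on different input lengths, so a leaf cannot be passed off as an internal node;
\item the orientation must be fixed by $idx$ rather than supplied by the adversary, which is where the deterministic sparse indexing $i = \mathcal{H}(\text{Name}(\package_j)) \bmod 2^D$ and the fixed depth from \setup\ are used.
\end{itemize}
We would state these as explicit assumptions on \publicParams. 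A remark would also handle index collisions of the name-to-index map: binding is per index, so two packages mapping to the same $idx$ is a completeness issue, not a binding issue.
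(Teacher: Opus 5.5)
Your proof is correct and follows the same route as the paper's proof sketch: both rest on the observation that two accepting openings at the same index $idx$ with $v \neq v'$ yield a collision in $H$ or in $\mathcal{H}$, and you make explicit the level-by-level comparison up to the first merge point of the two recomputed paths, which the paper leaves implicit. One refinement: the fixed-length node encodings and the $idx$-determined orientation are genuinely used in that comparison, but leaf/internal domain separation is not needed, because the depth is fixed at $D$ and both recomputations are compared at the same level throughout (if $\mathcal{H}(v)=\mathcal{H}(v')$ you already have an $\mathcal{H}$-collision, even when $\mathcal{H}=H$).
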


\noindent\emph{Attack Implication:} 
\attacker could commit to a benign package to satisfy initial checks, but later successfully prove the inclusion of a malicious payload at the exact same identifier, effectively injecting malware into the supply chain.

\begin{proof}[Proof Sketch]
To be able to break the position-binding property, \attacker must successfully output two valid openings $\open(idx, v, \digest_{PT})$ and $\open(idx, v', \digest_{PT})$ with $v \neq v'$.
Hence, \attacker has found a collision in the underlying hash function $\hashFun$ or a collision in the structured hash $\structuredHashFunc$. 
Since we assume $\hashFun$ (e.g., Poseidon) to be collision-resistant, we consider the probability of such a case negligible.
\end{proof}

\paragraph{Soundness of the Dual-Tree Verification}
We must ensure that \textbf{P1} and \textbf{P2} are mutually satisfied.

\begin{theorem}[Soundness of Dual-Tree Verification]\label{thm:zkp.soundness}
    \attacker cannot produce a valid proof $\pi$ for a non-compliant or non-existent dependency package $\package$.
\end{theorem}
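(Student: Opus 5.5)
The plan is a reduction: from any PPT adversary \attacker that outputs an accepting proof $\pi$ for a dependency $\package$ that is non-existent in \PT or non-compliant in \ST, we build an algorithm that breaks either the knowledge soundness of the folding-based argument system or the collision resistance of \hashFun. Throughout, the verifier uses the trusted roots $\digest_{PT}$ and $\digest_{ST}$ fetched from the package manager and the auditor, as in the root-binding discussion. We can therefore fix the public statement to $u=(\digest_{PT},\digest_{ST})$; a proof generated against divergent roots $\digest'_{PT^*},\digest'_{ST^*}$ fails immediately.

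\textbf{Step 1 (extraction).} Suppose \verifyZKP accepts $\pi$ on $u$. By knowledge soundness of the underlying SNARK and of the Nova-style IVC, an extractor recovers, except with negligible probability, a witness for every folded step $k$. Each such witness is a tuple $(\package_k, idx_k, path_{PT}^{(k)}, path_{ST}^{(k)})$ satisfying $\mathsf{ValidDep}(\package_k,\digest_{PT},\digest_{ST})$ as in Eq.~\eqref{eq:zkp-dual}. The guarantee that $z_i=f^i(z_0)$ holds for all $i\in[n]$ says that no step is skipped, so the offending package $\package$ appears as some $\package_k$.

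\textbf{Step 2 (case split).} There are two cases.
\begin{itemize}
\item \emph{Non-existent package.} The true leaf of \PT at $idx$ is some $v\neq\hashFun(\package)$, or an empty leaf of the sparse tree. The extracted $path_{PT}$ is then a valid opening of $\digest_{PT}$ at $idx$ to a different value. This contradicts Theorem~\ref{thm:pack.repo.integrity}, which in turn yields a collision in \hashFun or \structuredHashFunc.
\item \emph{Non-compliant package.} The honest \ST leaf at $idx$ is $0$, yet $path_{ST}$ opens $\digest_{ST}$ at the same $idx$ to $\mathbf{1}$. Since \ST is a Merkle-tree \VC of the same form, the identical position-binding argument applies. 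The recomputation from leaf $1$ and the honest recomputation from leaf $0$ agree at the root but start from different leaves, so there is a first level where the inputs differ and the outputs agree, which is a collision for \hashFun.
\end{itemize}
One point must be enforced explicitly: the same $idx$ is used in both trees. This holds because $idx=\structuredHashFunc(\mathrm{Name}(\package))\bmod 2^D$ is computed inside the circuit and the two trees are isomorphic.

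\textbf{Step 3 (bound).} Summing the three failure events gives $\Pr[\attacker\text{ wins}]\le \epsilon_{\mathrm{ks}}+\epsilon_{\mathrm{cr}}^{H}+\epsilon_{\mathrm{cr}}^{\mathcal{H}}$, which is negligible. Here $\epsilon_{\mathrm{ks}}$ is the knowledge-soundness error of the argument system and the other two are the collision-finding advantages against \hashFun and \structuredHashFunc.

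The main obstacle is Step 1. Extraction for recursive folding is delicate: Nova-style extraction over many steps needs a polynomial bound on the recursion depth, and we may have to assume knowledge soundness of the compressed proof (e.g.\ Spartan) directly rather than derive it. A second concern is that the sparse index is reduced mod $2^D$, so two package names can map to the same index. We will either assume the package manager rejects such collisions at commit time, or argue that the leaf value $\hashFun(\mathrm{Metadata}(\package))$ still binds the identity, so an index clash cannot be exploited without a collision in \hashFun.
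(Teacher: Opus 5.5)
Your proposal is correct and follows the paper's proof. The paper uses the same split: an inclusion forgery, reduced to Theorem~\ref{thm:pack.repo.integrity}, and a compliance forgery, reduced to a collision in $H$ through the shared deterministic index $idx$ of the isomorphic trees. What you add is the explicit knowledge-soundness extraction step (with your caveat on recursion depth) and the union bound, which the paper's sketch leaves implicit. That step is genuinely needed: for a compressing hash the bare statement $\exists\, path_{ST}$ opening $g_{ST}$ to $\mathbf{1}$ at $idx$ may well be true, so only an extracted path turns an accepting proof into an actual collision.
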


\noindent \emph{Attack Implication}
\attacker could include packages not existing within the package repository and include non-compliant packages, while generating a valid proof that falsely certifies both the contrary.

\begin{proof}[Proof Sketch]
    \attacker must succeed in at least one of the following forgeries:
    
    \begin{enumerate}
        \item \emph{Inclusion Forgery:} \attacker proves inclusion in $PT$ for a package that does not exist in the package repository. This requires forging a Merkle Path against $\digest_{PT}$, violating Theorem~\ref{thm:pack.repo.integrity}.
        
        \item \emph{Compliance Forgery:} \attacker proves compliance for a non-compliant package. Since the index $idx$ is deterministically derived from the package identity, proving compliance against $\digest_{ST}$ for a non-compliant status requires finding a hash collision to assert that a different path leads to the same root $\digest_{ST}$.
    \end{enumerate}

\end{proof}

\paragraph{Security against Parallel Tree Attacks (Accountability)}
We must ensure that \attacker cannot provide valid but forged proofs.

\begin{theorem}[Root Binding Accountability]
    \attacker cannot convince a code consumer to accept a valid proof generated for divergent \textbf{$\PT$*} and \textbf{$\ST$*}.
\end{theorem}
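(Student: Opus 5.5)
The approach is a reduction. Suppose \attacker outputs a proof $\pi_*$ computed honestly with respect to forged trees $\PT^*$ and $\ST^*$, whose roots $\digest'_{PT^*}$ and $\digest'_{ST^*}$ differ from the authoritative roots. Suppose further that $\pi_*$ is accepted by the consumer. We show that this forces either a soundness break of the folding-based argument system or a break of position-binding of the trusted trees, which contradicts Theorem~\ref{thm:pack.repo.integrity} or Theorem~\ref{thm:zkp.soundness}.

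The first step is to make precise what the consumer does in the Verify Phase of Algorithm~\ref{alg:zksbom_protocol}. The public statement $u$ fed to \verifyZKP contains the roots as public inputs: the IVC initial and final states $z_0, z_K$ carry the pair of roots, which each folding step reads and leaves unchanged. The consumer does not take these values from the vendor. It sets them to $\digest_{PT}$ and $\digest_{ST}^{\mathbf{\Psi}}$, fetched from the trusted package manager and auditor. So the verifier's decision is $\verifyZKP(vk, u=(\digest_{PT},\digest_{ST}^{\mathbf{\Psi}},\ldots), \pi_*)$, whatever roots the vendor claims. We then split into two cases according to the relation between $\pi_*$ and $u$.

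\emph{Case 1.} $\pi_*$ was produced as a valid proof for $u^*=(\digest'_{PT^*},\digest'_{ST^*},\ldots)$ with $u^*\neq u$, and it is also accepted under $u$. Soundness, in the knowledge-soundness form of the folding scheme (Nova), lets us extract a witness for $u$. That witness consists of Merkle paths in every step that satisfy $\mathsf{ValidDep}$ against the \emph{trusted} roots. Then either (a) the SBOM dependencies really are authentic and compliant in $\PT,\ST$, so nothing was forged and the parallel trees are irrelevant; or (b) some dependency is absent from or non-compliant in the trusted trees. In case (b) the extracted paths give an opening contradicting the committed leaf. That is an inclusion or compliance forgery, ruled out by Theorems~\ref{thm:pack.repo.integrity} and~\ref{thm:zkp.soundness} up to a negligible collision probability for \hashFun.

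\emph{Case 2.} \attacker instead tries to make the roots of $\PT^*,\ST^*$ equal the trusted ones while the trees differ at some leaf. Two distinct trees with the same root yield, at the first differing node on the path to the root, a collision in \hashFun, which again happens only with negligible probability.

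Combining the cases, the success probability is bounded by $\mathsf{negl}_{\text{sound}}(\lambda)+\mathsf{negl}_{\text{CR}}(\lambda)$.

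The main obstacle is the first step: rigorously arguing that the proof is bound to the roots, meaning the roots really are public inputs of the relation and not private witness values. If a circuit takes the roots as witnesses, the argument collapses. I would therefore state explicitly, as an assumption read off the circuit design, that the roots are part of the IVC state $z_0$. I would then invoke knowledge soundness rather than plain soundness, since extraction is needed to turn acceptance into concrete openings against $\digest_{PT}$ and $\digest_{ST}^{\mathbf{\Psi}}$.
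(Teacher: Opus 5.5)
Your proof is correct at the paper's level of rigor, but it follows a different route from the paper.

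\textbf{The paper's route.} The paper charges everything to the Merkle hash in a single step. Divergent trees have roots $\digest'_{PT^*}\neq\digest_{PT}$ and $\digest'_{ST^*}\neq\digest_{ST}$ by collision resistance. The consumer verifies against the trusted roots. Making $\pi^*$ verify there ``would need to prove that the different pairs of trees have the same root, which is a hash collision.'' The paper never invokes soundness of the argument system, and it takes for granted that a proof is bound to the roots it was generated for.

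\textbf{Your route.} You make that binding an explicit design assumption: the roots sit in the IVC state $z_0$. This is something the paper only implies, since its circuit description lists private inputs but never says the roots are public. You then separate the two failure sources. Acceptance under the trusted roots is turned, via knowledge extraction, into openings against $\digest_{PT}$ and $\digest_{ST}$. These openings either certify a true claim or break position binding (Theorems~\ref{thm:pack.repo.integrity} and~\ref{thm:zkp.soundness}). Equal roots for different trees is a direct collision.

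\textbf{What each buys.} Your version attributes the error terms correctly as $\mathsf{negl}_{\text{sound}}+\mathsf{negl}_{\text{CR}}$; the paper's closing sentence says ``non-negligible'' where ``overwhelming'' is meant. It also shows the theorem is essentially a corollary of Theorem~\ref{thm:zkp.soundness} once root binding holds. The price is that you prove the semantic form: no false claim about the trusted state is accepted. In your Case~1(a) you concede that $\pi^*$ itself may be accepted when its dependencies happen to be authentic and compliant. The literal conclusion the paper asserts, that $\pi^*$ is rejected, does not follow from soundness alone. It needs a separate proof-to-statement binding property. In Nova, that property comes from the verifier recomputing a hash of $(z_0,z_i,\ldots)$ and matching it against the folded instance's public IO, i.e.\ from collision resistance of Nova's internal hash, not the tree hash. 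Your form is the one that matters for the stated attack implication, but you should say explicitly that this is the statement you prove.
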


\noindent\emph{Attack Implication}
\attacker could maintain a private and forged state where malicious packages are authorized. \attacker could generate mathematically valid proofs for this parallel reality and bypass the public auditing process without detection.

\begin{proof}
    Let \attacker create parallel \textbf{$\PT$*} and \textbf{$\ST$*}, where respectively can exist malicious packages and contemporarily flagged as compliant with the policy constraints. 
    The \attacker can generate a mathematically valid proof $\pi*$ such that $\verify(\pi*, \digest'_{PT*}, \digest'_{ST*}) = \True$.
    
    However, \veribom is parameterized to rely on the trusted roots $\digest_{PT}$ and $\digest_{ST}$. 
    The verification check performed is \\ $\verify(\pi*, \digest_{PT}, \digest_{ST})$.
    Since the hash function is collision-resistant, $\digest'_{PT*} \neq \digest_{PT}$ and $\digest'_{ST*} \neq \digest_{ST}$.
    For the proof $\pi*$ to verify against $\digest_{PT}$ and $\digest_{ST}$, the adversary would essentially need to prove that the different pairs of trees have the same root, which is a hash collision. Thus, the verification returns $\False$ with non-negligible probability.
\end{proof}

The proposed architecture and the cryptographic primitives employed enable \veribom\ to address the aforementioned challenges effectively. However, as with any distributed system, \veribom\ must be evaluated against established adversarial models. We will discuss how the design of \veribom\ inherently mitigates the attacks from the attacker categories introduced in the threat model, i.e., the targeting attacker ($\mathcal{A}_T$), the competitive attacker ($\mathcal{A}_C$), and of the malicious vendor ($\mathcal{A}_V$).

\begin{itemize}[leftmargin=*]
    \item \emph{Preservation of confidentiality.}
    The targeting attacker $\mathcal{A}_T$ aims to fingerprint the software to identify possible  components that may contain specific vulnerable libraries.
    Since SBOM and the dependency graph with sensitive information are not explicitly shared to the client or made publicly available, $\mathcal{A}_T$ should perform {black-box analysis}, such as static/dynamic analysis to extract the dependencies.
    Considering the vast search space of existing libraries and version combinations, it becomes computationally impractical for the attacker to find the correct exploitable versions without knowing its dependency graph.

    \item \emph{Resistance to competitive analysis.}
    The competitive attacker $\mathcal{A}_C$ aims to reconstruct the graph of dependencies in order to exfiltrate any intermediate software supplier, hence to gain business advantage.
    Relying on the {Zero-Knowledge} property of the cryptographic scheme, the generated proof $\pi$ reveals no information about the witness $w$, which is the underlying SBOM that only the software vendor knows.
    The cryptographic proof is the only information used for checking the validity of the proof itself, not the data it verifies.
    Therefor, for $\mathcal{A}_C$, reconstructing the proprietary dependency tree, and hence the chain of supplier would require breaking the underlying cryptographic hardness assumptions.

    \item \emph{Prevention from version equivocation.}
    $\mathcal{A}_V$ generates a proof for a compliant artifact $A$ while delivering a malicious artifact $A'$. 
    In this case, it may result that the software has vulnerable dependencies while the proof ensures the contrary.
    \veribom\ mitigates this scenario via binding the artifact's digest in the proof itself. 
    There would result computationally infeasible to find a second witness $w'$ such that $Commit(w') = Commit(w)$. Thus, the proof is strictly bound to the specific delivered software version.

    \item \emph{Prevention from non-compliance.}
    $\mathcal{A}_V$ attempts to deceive the client by generating a valid proof $\pi$ for a software that violates the security policies established by the auditor and requested by the client itself.
    \textsc{VeriBOM} enforces integrity through a {dual-tree binding} mechanism: every leaf in the PT is mathematically mapped by the auditor in a \ST that aggregates the results of the auditing.
    If a software dependency is flagged as non-compliant (i.e., value $0$) in the \ST, and the software component is a dependency within the SBOM, $\mathcal{A}_V$ cannot produce a proof that results correct against $\digest_{ST}$.
    The cryptographic constraints would demonstrate that the generated proof $\pi$ not derivable from the $\digest_{PT}$ and $\digest_{ST}$.

    \item \emph{Resistance to replay attacks.}
    $\mathcal{A}_V$ deceives the verifier submitting a previously generated valid proof $\pi_{old}$. 
    In case the software is not updated, the prevention from version equivocation would not work.
    Instead, \veribom\ mitigates replay attacks by enforcing \textit{cryptographic freshness}: the validity of a proof $\pi$ is strictly bound to $\digest_{\PT}$ and $\digest_{\ST}$ that are publicly available and updated.
    Since any update to the package repository or the audit policies results in an update of the $\digest_{\PT}$ and $\digest_{\ST}$, a proof generated by embedding outdated roots will mathematically fail verification.
\end{itemize}

%% file: Sec5-Evaluation.tex
\section{Evaluation}
\label{sec:evaluation}
\subsection{Implementation}












\begin{figure}
	\centering
	\includegraphics[scale=0.8]{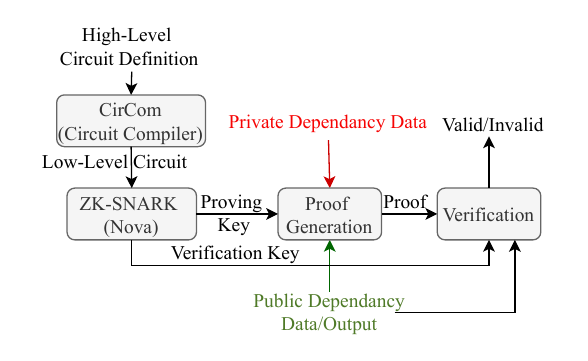}
	\caption{Implementation architecture of our ZK-Proof system.}
	\label{fig:zkp.system.implementation}
\end{figure}
 
The implementation of \veribom\ covers the entire workflow, from vector commitments construction for package repositories passing from proof generation to proof verification.  

\paragraph{Orchestration}
\textbf{Python} handles the general data management and the library \textbf{pyDataLog} handles the definition and progation of the Horn Clauses. A \textbf{SQLite} backend supports the auditor's compliance status (the ST). It maintains the state of the auditor's ST, allowing the Python Semantic Engine to efficiently query dependency graphs and propagate compliance flags before interacting with the cryptographic \prover.

\paragraph{Cryptographic Core}
The performance-critical cryptographic operations are executed in \textbf{Rust}.
\veribom\ automatically extracts and transforms package metadata from software repositories into a zero-knowledge-compatible format.  
To efficiently prove properties over variable-length SBOMs, we adopted cryptographic libraries for Merkle trees, vector commitments, and zkSNARKs using \textbf{Nova}~\cite{nova-paper, nova, nova-scotia}, 
Nova is a high-speed recursive folding scheme that realizes Incrementally Verifiable Computation (IVC). Unlike traditional SNARKs (e.g., Groth16), Nova allows us to verifies one package at a time. In the generation of the proof, the \prover, i.e., the software vendor, iterates through the SBOM list and accumulates the cryptographic proofs at a constant size. 
The final compressed proof $\pi$ is generated only at the end of the iteration, ensuring that the proof size remains succinct regardless of the SBOM size.
The proving system operates on the \textbf{Pasta curves} (Pallas and Vesta), leveraging their efficient cycle properties for recursive verification.

\paragraph{Circuit Logic}
To implement our zero-knowledge proofs, we use \textbf{Circom}~\cite{circom}, a domain-specific language with its associated toolchain (shown in Figure~\ref{fig:zkp.system.implementation}). Each statement in our system is expressed as a Circom \emph{circuit template}, which defines constraints on inputs and intermediate values. These templates are compiled into an intermediate representation tailored for zkSNARK systems that captures the circuit logic and enables efficient proof generation and verification. In Circom, computations are represented as circuits, where an arithmetic circuit is composed of \emph{signals} representing inputs, outputs, and intermediate values, and gates that perform arithmetic operations such as addition and multiplication.
The core of our design is the \texttt{Attest} circuit template. This circuit takes as private inputs the dependency attributes (name, version, license information, code checksum, permission, etc.), the hash from the previous step, randomizers ($\rho$) for uniqueness, and the Merkle path for the inclusion proof. The circuit computes a Poseidon hash of the dependency attributes, verifies its inclusion in a Merkle tree, and outputs an updated accumulator hash derived from the current hash and  $\rho$. 

We employ the \textbf{Poseidon} hash function (via \texttt{circomlib}) for all in-circuit commitments (PT and ST), as it minimizes the R1CS constraint count compared to traditional hashes.
The final compressed proof $\pi$ is generated using a \textbf{Spartan} SNARK wrapper, ensuring a constant proof size regardless of the SBOM depth.

\subsection{Case Studies}
We simulated \veribom\ against four distinct scenarios ($S_x$ illustrated in Table \ref{tab:scenarios} of the software supply chain lifecycle. \textit{Generation Time} includes witness computation and ZK proving. \textit{Verification Result} indicates the final policy enforcement decision.

\begin{table}[h]
    \centering
    \caption{\textbf{Experimental Validation of Lifecycle Scenarios.}}
    \label{tab:scenarios}
    \resizebox{\columnwidth}{!}{%
    \begin{tabular}{@{}llcccl@{}}
    \toprule
    \textbf{ID} & \textbf{Scenario} & \textbf{Artifact State} & \textbf{Authority Roots} & \textbf{Outcome} & \textbf{Failure Cause / Implication} \\ \midrule
    
    \textbf{$S_1$} & \textbf{Baseline Compliance} & Compliant SBOM & $\digest_{PT}, \digest_{ST}$ & \textcolor{ForestGreen}{\textbf{Valid}} & Authorized deployment. \\ 
    
    \textbf{$S_2$} & \textbf{Injection} & \makecell[l]{Unknown Package OR \\ Policy Violation} & $\digest_{PT}, \digest_{ST}$ & \textcolor{BrickRed}{\textbf{Invalid}} & \makecell[l]{Blocked by PT (Not found) OR \\ Blocked by ST (Flagged 0).} \\ 
    
    \textbf{$S_3$} & \textbf{Revocation} & Vulnerable SBOM & \makecell[c]{$\digest_{PT}$ (Same) \\ $\digest'_{ST}$ (\textbf{Updated})} & \textcolor{BrickRed}{\textbf{Invalid}} & \makecell[l]{ST Mismatch. \\ auditor flagged component as unsafe.} \\ 
    
    \textbf{$S_4$} & \textbf{Remediation} & Patched SBOM & $\digest_{PT}, \digest'_{ST}$ & \textcolor{ForestGreen}{\textbf{Valid}} & Vendor synced with new policy. \\ \bottomrule
    \end{tabular}%
    }
\end{table}
In \textit{$S_1$}, we consider the desirable case in which the package manager publishes root $\digest_{PT}$ and the auditor publishes root $\digest_{ST}$. The software vendor generates an SBOM for its software and generates a proof $\pi$, against the roots.
By the successful validation of the proof, the software vendor demonstrates to include in its software artifact the official dependencies that result \textit{compliant} with the policy constraints.

In \textit{$S_2$}, we consider an \textit{injection} case in which a software developer generates a proof $\pi*$ with dependencies no longer available within the package repository and/or dependencies that violate the policy constraints.
The software vendor results in a general status of non-compliance, but the absence of a plaintext SBOM preserve the confidentiality and does not allow any malicious actor to find the exploitable component.

In \textit{$S_3$}, we consider a \textit{revocation} case where the auditor has flagged a dependency as vulnerable while still used by the software vendor. The auditor hence updates its root as $\digest'_{ST}$.
The original proof $\pi$, generated against $\digest_{PT}$ and $\digest_{ST}$ is immediately rejected by the verifier, effectively blocking the now-stale artifact without client-side scanning.
This demonstrates a case of compliance with the package manager but of non-compliance vendor with the policy constraints.

In \textit{$S_4$}, we consider a \textit{remediation} case where the vendor responds to the incident in $S_3$. 
The vendor patch the software artifact by removing the non-compliant dependencies and regenerate the proof $\pi'$ against the updated $\digest_{PT}$ and $\digest'_{ST}$.

\subsection{Results}
We analyze the scalability of the proposed method along five dimensions: VC size, Merkle path size, proof size, proof generation time, and proof verification time. We evaluate each metric with respect to two variables: (1) the total number of packages in the system and (2) the number of dependencies associated with the software being audited. In our evaluation, the term \emph{dependencies} refers to the complete set of both immediate and transitive dependencies of the target library.
We run our experiments on a commodity laptop with an Intel(R) Core(TM) i5-8265U CPU processor with 16GB of RAM running Ubuntu~22.04.

  \begin{figure}[t]
	\centering
	\includegraphics[width=\linewidth]{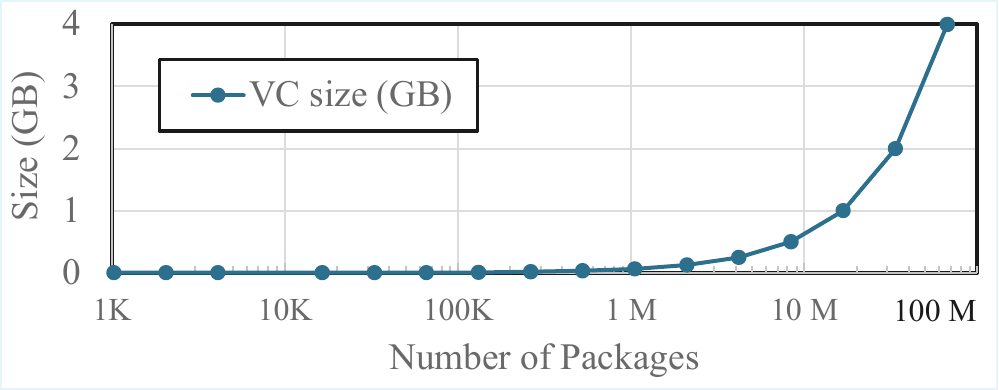}
	\caption{Storage for the package repository VC.}
	\label{fig:scalability_vc_size}
\end{figure}

\begin{figure}[t]
	\centering
	\includegraphics[width=\linewidth]{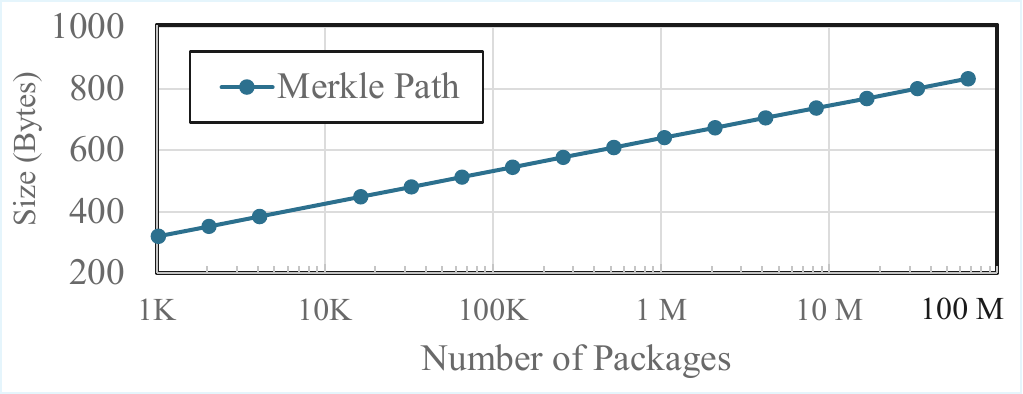}
	\caption{Merkle path size.}
	\label{fig:scalability_merkle_path_size}
\end{figure}

Figure~\ref{fig:scalability_vc_size} represents the additional public storage overhead introduced by the vector commitment over the package manager's metadata database. The x-axis is shown in logarithmic scale and indicates the total number of packages. The results demonstrate that for the current scale of ecosystem registries such as \texttt{crates.io}, the additional storage required by the vector commitment remains below 1~GB, even when committing to more than 10 million packages. Furthermore, the system is future-proof: even if the number of published packages exceeds 100 million, the total VC storage overhead would be approximately 4~GB which is within the capabilities of modern infrastructure.

Figure~\ref{fig:scalability_merkle_path_size}  shows the size of the inclusion proofs required to authenticate each dependency of an artefact.
The results confirm that the Merkle path size, and consequently, the complexity of proving authenticity for a single dependency grows logarithmically with the number of packages. For example, increasing the total number of packages from approximately 1 million to 2 million results in only one additional hash computation in the Merkle path. This demonstrates the efficiency and scalability of Merkle-based proofs, even as the ecosystem grows to tens or hundreds of millions of entries.
Furthermore, the proof size measures the storage overhead of the zero-knowledge proofs appended to the SBOM data. Our evaluation shows that the proof size remains nearly constant in all test scenarios, approximately 13\,KB regardless of the number of dependencies or the overall size of the package registry. This consistency is achieved by using a folding-based zkSNARK approach to generate proofs with bounded size.

Figure~\ref{fig:scalability_proof_time} illustrates the computational complexity incurred by the \prover\ during proof construction. 
The results show that the total number of packages has negligible impact on proof generation time, e.g., proving the authenticity of a library with 300 dependencies takes only about 4 seconds longer in a system with 1 million packages compared to one with 1{,}000 packages, representing less than a 10\% increase. 
By contrast, the number of dependencies has a higher effect: proving time scales approximately linearly with the number of dependencies. This is expected, as each dependency must be authenticated individually through a Merkle inclusion proof and rule check. Nonetheless, the generation times remain within practical bounds, confirming that the system is scalable and suitable for real-world usage scenarios involving complex software systems.

Moreover, Figure~\ref{fig:scalability_verify_time} shows the time taken by the verifier to validate the generated zero-knowledge proofs, which remains almost constant between 80 and 95 milliseconds across all evaluated configurations, regardless of the total number of packages or the number of dependencies in the software being verified. The slight variations are mainly due to external factors, such as system I/O or runtime overhead, rather than the intrinsic complexity of the verification process itself. This consistency is due to using succinct proof systems, i.e., as folding-based zkSNARKs. This property ensures that the system can scale effectively while maintaining low verification costs, making it suitable for integration in performance-sensitive environments, including client-side verifiers or automated software production pipelines.



\begin{figure}[t]
    \centering
    \includegraphics[width=\linewidth]{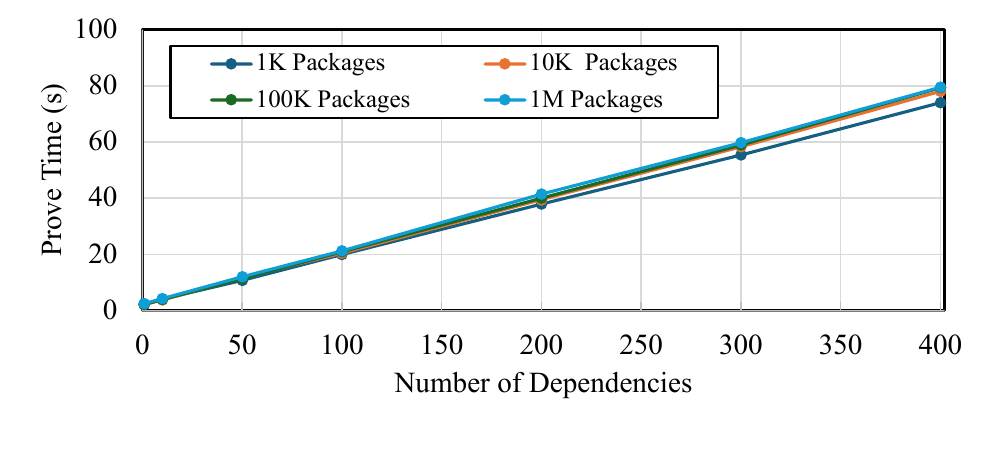}
    \caption{Proof generation time.}
    \label{fig:scalability_proof_time}
\end{figure}

\begin{figure}[t]
    \centering
    \includegraphics[width=\linewidth]{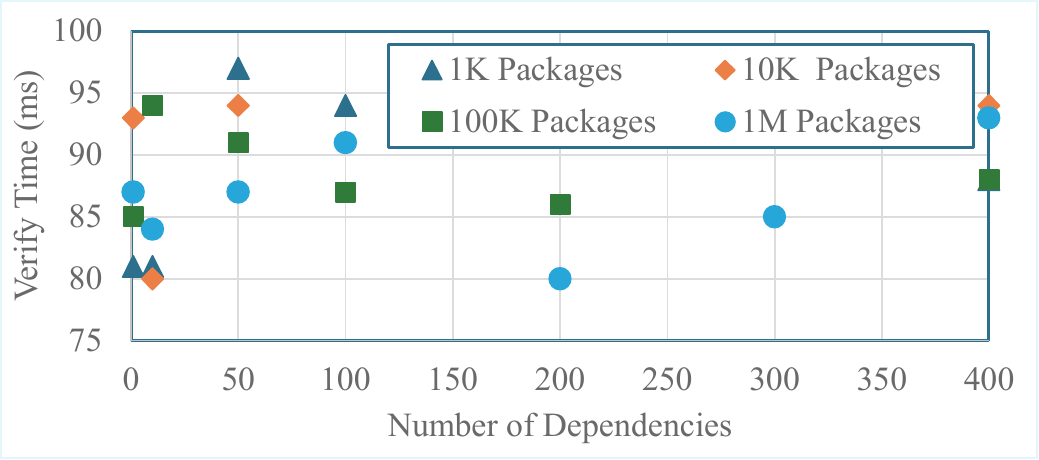}
    \caption{Verification time.}
    \label{fig:scalability_verify_time}
\end{figure}



\subsection{Discussion}
Our scalability experiments demonstrate that the proposed approach scales in all evaluated dimensions, indicating its practicality for real-life systems with small performance overhead.  
Moreover, a key advantage of our method is its potential for integration with existing industry standards and workflows. This is achieved by leveraging the current package databases maintained by package managers (e.g., \texttt{Crates.io} and \texttt{Homebrew}) without requiring modifications to their structure or content. All cryptographic mechanisms introduced in our approach operate as an \emph{external layer} over unmodified metadata (as illustrated in Fig.~\ref{fig:sys}). Therefore, our approach requires no changes to existing package registries or consumer tools, making it possible to integrate with state-of-the-art software supply chain infrastructures.  
In addition, our framework is language-agnostic and supports projects that span multiple software ecosystems. To demonstrate this, we conducted experiments involving Rust and C, showing that the framework can generate and verify proofs for libraries with cross-language dependencies. Table~\ref{tab:interoperability} summarizes the results for this interoperability scenario.

\begin{table}[h]
    \centering
    \caption{Interoperability}
    \label{tab:interoperability}
    \begin{adjustbox}{width=\linewidth}
    \begin{tabular}{|cc|cc|c|c|c|}
        \hline
        \multicolumn{2}{|c|}{Packages} & \multicolumn{2}{c|}{Dependencies} & \multirow{2}{*}{ \STAB{c}{Proof \\ Size}} & \multirow{2}{*}{ \STAB{c}{Proof \\ Time}} & \multirow{2}{*}{ \STAB{c}{Verify \\ Time}} \\ \cline{1-4}
        \multicolumn{1}{|c|}{C} & Rust & \multicolumn{1}{c|}{C} & Rust &  &  &  \\ \hline
        \multicolumn{1}{|c|}{\multirow{4}{*}{\STAB{c}{$2^{14}$\\ \texttt{Brew} \\ packages }}} & \multirow{4.1}{*}{\STAB{c}{$2^{14}$\\ \texttt{crates.io} \\ packages }} & \multicolumn{1}{c|}{1} & 10 & 13 KB & 4.1 s & $<$ 0.1 s \\ \cline{3-7} 
        \multicolumn{1}{|c|}{} &  & \multicolumn{1}{c|}{10} & 1 & 13 KB & 4.1 s & $<$ 0.1 s \\ \cline{3-7} 
        \multicolumn{1}{|c|}{} &  & \multicolumn{1}{c|}{50} & 10 & 13 KB & 13.6 s & $<$ 0.1 s \\ \cline{3-7} 
        \multicolumn{1}{|c|}{} &  & \multicolumn{1}{c|}{10} & 50 & 13 KB & 13.6 s &  $<$ 0.1 s \\ \hline
        \end{tabular}
    \end{adjustbox}
\end{table}

Our evaluation shows that the proposed system enables verifiers to check the authenticity of disclosed packages and their compliance with specified policies while preserving the confidentiality of sensitive SBOM data. Detecting compromised or policy‑violating dependencies is, in effect, guaranteed by our construction: zero‑knowledge proofs over vector commitments ensure that any disclosed dependency or policy claim is cryptographically enforced and cannot be forged. Consequently, additional controlled experiments aimed at revalidating correctness would yield limited new insight, as these guarantees derive from the underlying cryptographic foundations. 
Assessing how the system behaves in real‑world settings, where many independent parties publish and verify SBOMs, requires deployment at scale. Such a study would clarify practical factors including performance in distributed environments, coordination among participants, and adoption barriers. We leave this to future work as an important step toward studying the framework’s real‑world impact on software supply chain security.

Another consideration is the update mechanism for package repositories. When packages are added or modified, a new vector commitment is constructed to reflect the updated metadata. Although this requires recomputation, the cost remains low because the update affects only the path from the new node to the root, which is logarithmic in size. It is important to note that proofs are generated with respect to a specific version of the vector commitment, and previously generated proofs remain valid for the older version.
Future work will investigate incremental commitment schemes to further optimize update efficiency without compromising transparency or verifiability.  

Furthermore, all operations introduced in our approach, including VC construction and ZKP generation, can be performed by any party without requiring private keys or reliance on a trusted authority. This property facilitates adoption and decentralized implementation, i.e., multiple parties can collaboratively construct the VC and generate proofs, as any entity with access to public package metadata can reproduce commitments and proofs in a verifiable manner.

\subsection{Threats to Validity and Limitations}
The threats to the validity of our results and the limitations of the current design are as follows.

\begin{itemize}[leftmargin=*]
    \item \emph{Correctness of SBOMs}
     We assume that the {SBOM generated by the software vendor is correct and complete}, e.g., a software vendor does not omit a specific and vulnerable package from the SBOM while still including it in the delivered software. In such case, the generated proof would be correct and verifiable, but the software would be remain vulnerable.
    Therefore, \veribom\ still rely, like any technological component, on the correctness of the inputs provided by the software vendors.

    \item \emph{\prover\ performance overhead}
    Constructing Zero-Knowledge proofs entails a higher computational cost compared to other solutions like symmetric encryption.
    For extremely large SBOMs, the proof generation time might become expensive, however, our adoption of folding schemes specifically addresses this by reducing the complexity of incremental adding components to the proof.

    \item \emph{Dependency on trusted third parties}
    Our model relies on public information generated by trusted third parties.
    \veribom\ trusts $\digest_{PT}$ and $\digest_{ST}$, but if they are not updated with respect to recent security discoveries or compromised, \veribom\ will generate a valid proof for a potentially vulnerable artifact. 
    The solution is anyway flexible for allowing verification by checking multiple authorities simultaneously, reducing reliance on a single point of failure.

    \item \emph{Circuit correctness}
    The security of \veribom\ depends on the correctness of the arithmetic zero-knowledge circuit constraints. An under-constrained circuit or an incorrect circuit design could theoretically allow a malicious \prover\ to forge compliance. Therefore, we propose the use of formal verification tools for zero-knowledge circuits as future work to mathematically guarantee constraint correctness.
\end{itemize}